\documentclass[a4paper,12pt]{article}
\usepackage{amsmath,amsthm,amsfonts,amssymb,bbm}
\usepackage{graphicx,psfrag,subfigure,color,cite}

\numberwithin{equation}{section}

\newcommand{\e}{\varepsilon}
\newcommand{\Pb}{\mathbb{P}}

\newcommand{\I}{\mathrm{i}}
\newcommand{\dx}{\mathrm{d}}
\newcommand{\R}{\mathbb{R}}
\newcommand{\N}{\mathbb{N}}
\newcommand{\Z}{\mathbb{Z}}

\newcommand{\Id}{\mathbbm{1}}
\newcommand{\Ai}{\mathrm{Ai}}
\renewcommand{\Re}{\operatorname{Re}}

\newcommand{\Or}{{\cal O}}

\newtheorem{prop}{Proposition}[section]
\newtheorem{thm}[prop]{Theorem}
\newtheorem{lem}[prop]{Lemma}
\newtheorem{defin}[prop]{Definition}

\newtheorem{cla}[prop]{Claim}

\newtheorem{rem}[prop]{Remark}
\newenvironment{remark}{\begin{rem}\normalfont}{\end{rem}}

\title{Shock fluctuations in flat TASEP under critical scaling}
\author{Patrik L.\ Ferrari\thanks{Institute for Applied Mathematics, Bonn University, Endenicher Allee 60, 53115 Bonn, Germany. E-mail: {\tt ferrari@uni-bonn.de}} \and
Peter Nejjar\thanks{Institute for Applied Mathematics, Bonn University, Endenicher Allee 60, 53115 Bonn, Germany. E-mail: {\tt nejjar@uni-bonn.de}}
}

\date{August 21, 2014}

\begin{document}
\maketitle
\sloppy

\begin{abstract}
We consider TASEP with two types of particles starting at every second site. Particles to the left of the origin have jump rate $1$, while particles to the right have jump rate $\alpha$. When $\alpha<1$ there is a formation of a shock where the density jumps to $(1-\alpha)/2$. For $\alpha<1$ fixed, the statistics of the associated height functions around the shock is asymptotically (as time $t\to\infty$) a maximum of two independent random variables as shown in~\cite{FN14}.
In this paper we consider the critical scaling when $1-\alpha=a t^{-1/3}$, where $t\gg 1$ is the observation time. In that case the decoupling does not occur anymore. We determine the limiting distributions of the shock and numerically study its convergence as a function of $a$. We see that the convergence to $F_{\rm GOE}^2$ occurs quite rapidly as $a$ increases. The critical scaling is analogue to the one used in the last passage percolation to obtain the BBP transition processes~\cite{BBP06}.
\end{abstract}

\newpage

\section{Introduction}\label{SectIntro}
We consider the totally asymmetric simple exclusion process (TASEP) on $\Z$ with particle-dependent rates. Each site can be occupied by at most a particle (exclusion constraint). Particles  try to jump to their neighboring site with a given rate and the jump occurs provided the target site is empty.

Depending on the initial conditions and / or choice of the jump rates, the macroscopic density of particle can have discontinuities, also called shocks. For particle-independent jump rates, it is known that the stationary measures are either blocking measures (with no current) or product Bernoulli measures with fixed density~\cite{Li99}. In this setting, naturally one considered random initial conditions with Bernoulli initial conditions but density $\rho_-$ to the left and $ \rho_+$ to the right of the origin with the condition $0<\rho_-<\rho_+<1$. The shock moves with speed $1-\rho_- - \rho_+$ and has Gaussian fluctuations in the scale $t^{1/2}$ for large time $t$~\cite{Fer90,FF94,PG90} (the shock position is often defined by the position of a second class particle). The same holds for the fluctuations of a tagged particle around the macroscopic shock position.

Shocks are located where the characteristics of the associated PDE meet. The Gaussian fluctuations and the $t^{1/2}$ scale are due to the randomness in the initial condition (see~\cite{vanB91} for a physical argument). The reason is that at the particles at the shock are non-trivially correlated with two regions at time $0$ which are of order $t$ away from the origin (specifically, the positions where the characteristics meeting at the shock start), whose fluctuations are Gaussian and of order $t^{1/2}$. Since TASEP belongs to the Kardar-Parisi-Zhang (KPZ) universality class, the fluctuations created by the dynamics until time $t$ are  $\mathcal{O}\left(t^{1/3}\right)$   and therefore are not seen in the $t^{1/2}$ scale.

In our previous work~\cite{FN14} we analyzed one situation where shocks occur with non-random initial condition. In that case, the shock width is of order $t^{1/3}$ and in that scale the fluctuations of a tagged particle has a law that is the maximum of two random variables, because the pieces of information coming from each of the two characteristics that meet at the shock are asymptotically independent.

In this paper we consider the situation where at time $t=0$ particles occupy the whole $2\Z$, particles starting from $2\Z_-$ have unit jump rate and the ones starting from $2\Z_+$ have jump rate $\alpha$. We consider large time $t$ and the critical scaling $\alpha-1=\Or(t^{-1/3})$. In this situation the pieces of  information from the two characteristics are strongly correlated and, at the same time, the system differs from the constant density case. Therefore we call this a critical scaling. We obtain the limiting process describing the fluctuations of the particles positions around the shock (see Theorem~\ref{ThmMain}). We then perform a numerical study of the convergence of the distribution function to the one expected in the case of a shock with density jump of order one obtained in~\cite{FN14}. For the numerical evaluation we use the Matlab program developed by Bornemann~\cite{Born08} and we see that the convergence is surprisingly fast, see Figure~\ref{FigDistr}.

As discussed in Section~\ref{SectLPP}, a similar critical scaling, where particles starts from $\Z_-$ but the first $n$ particles have jump rate $\alpha$ has been considered in the context of last passage percolation. In the large time $t$ limit, the distribution function of a particle that is around the origin at time $t$ has a BBP distribution function~\cite{BBP06,BP07}.

The rest of the paper is organized as follows. In Section~\ref{SectMainResults} we define the model, discuss the macroscopic behavior, provide the main result, and finally present the numerical results. We also present the result  in terms of a last passage percolation model. In Section~\ref{SectAsympt} we provide the proof of the main theorem. In the appendix we give  an explicit formula in terms of Airy function of the kernel appearing in the main theorem.

\subsubsection*{Acknowledgments} P.L. Ferrari was supported by the German Research Foundation via the SFB 1060--B04 project. P. Nejjar is grateful for the support of the Bonn International Graduate School (BIGS).

\section{Model and results}\label{SectMainResults}
We consider the totally asymmetric simple exclusion process and assign to each particle a label. We denote by $x_n(t)\in\Z$ the position of particle with label $n$ at time $t$. The dynamics is as follows: each particle independently tries to jump to its right site and the jump occurs if that site is empty. The rate with which particle $n$ tries to jump from $x_n(t)$ to $x_n(t)+1$ is denoted by $v_n$. The dynamics preserves the order of particle and we use the convention to label from right to left, i.e., $x_n(t)>x_{n+1}(t)$ for any $n\in\Z$ and time $t$. The process is well-defined since in a finite time no information is coming from $\pm\infty$ (as one can see either from the graphical construction~\cite{Har78} or from the general Hille-Yoshida semigroup approach for interacting particle systems, see Section I.3-4 of~\cite{Li85b}).

In this paper we consider the following specialization of the model:
\begin{align}\label{twospeedTASEP}
x_{n}(0)=-2n, n\in \mathbb{Z},\quad
v_n=\left\{\begin{array}{ll} 1,& n>0, \\\alpha, & n\leq 0. \end{array}\right.
\end{align}
As there are two jump rates involved, we call this model \textit{two-speed TASEP}. Further, we call particles with label $n\leq 0$ \emph{$\alpha$-particles} and the ones with label $n>0$ are called \emph{normal particles}. We will be interested in the fluctuations of particle positions of normal particles which are around the shock created by the $\alpha$-particles. First we discuss the macroscopic picture.

\subsection{Macroscopic behavior}
In the main result we describe the fluctuations with respect to the deterministic macroscopic behavior, given as follows. Under hydrodynamic scaling, the evolution of the particle density $\varrho$ of the normal particles is governed by the Burgers equation
\begin{equation}\label{eq1}
\partial_t \varrho+\partial_x \varrho(1-\varrho)=0.
\end{equation}
The first normal particle is, however blocked by the last $\alpha$-particle, which starts at the origin and moves with an average speed $\alpha/2$. Therefore, the macroscopic density profile of the normal particles can be obtained by solving \eqref{eq1} for $x\in(-\infty,t\alpha/2]$ under the boundary condition \mbox{$\varrho(t \alpha/2,t)=\min\{\alpha/2,1\}$}. The system we consider corresponds to the initial condition $\varrho(x,0)=1/2$ for $x\leq 0$ and the macroscopic density profile of the normal particles is as follows: for $\alpha\geq 1$, there is a rarefaction fan given by
\begin{equation}
\varrho(\xi t,t)=\left\{\begin{array}{ll}
1/2, &\textrm{for }\xi\leq 0,\\
(1-\xi)/2,&\textrm{for }\xi\in [0,\min\{1,\alpha-1\}],\\
\max\{0,1-\alpha/2\},&\textrm{for }\xi\in [\min\{1,\alpha-1\},\alpha/2],
\end{array}\right.
\end{equation}
while for $\alpha\in [0,1)$ there is a shock moving with speed $(\alpha-1)/2$,
\begin{equation}
\varrho(\xi t,t)=\left\{\begin{array}{ll}
1/2, &\textrm{for }\xi\leq(\alpha-1)/2,\\
1-\alpha/2,&\textrm{for }\xi\in [(\alpha-1)/2,\alpha/2].
\end{array}\right.
\end{equation}
As a consequence, the particle with label $n=\lfloor \eta t\rfloor$ will be around the shock position if $\eta\simeq (2-\alpha)/4$.

\subsection{Critical scaling regime}
In this paper we focus at the critical regime where the discontinuity of the density is small. For a fixed $a\in\R$ and a large observation time $t\gg 1$, we scale the jump rate of the $\alpha$-particles critically, i.e., we consider
\begin{equation}\label{alphascaling}
\alpha=1-2a (t/2)^{-1/3}.
\end{equation}
In view of the macroscopic description, particles with number given by \mbox{$(2-\alpha) t/4=t/4+a (t/2)^{2/3}$} is around the shock and its position is around $-2n+t/2=-2a(t/2)^{2/3}$. Unlike in the macroscopic shock studied in~\cite{FN14}, where around the shock position particles are correlated over a $t^{1/3}$ scale, in the present situation particles are correlated over a $t^{2/3}$ scale. Therefore we consider the scaling\footnote{In what follows we will not write the integer parts explicitly.}
\begin{equation}\label{eq2.6}
n(u,t)=\left\lfloor \frac{t}{4}+(a+u)(t/2)^{2/3}\right\rfloor,\quad x(u,t)=\left\lfloor-2(a+u)(t/2)^{2/3}\right\rfloor,
\end{equation}
and define the accordingly scaled particle position process by
\begin{equation}
u\mapsto X_t(u)=\frac{x_{n(u,t)}-x(u,t)}{-(t/2)^{1/3}}.
\end{equation}

Our main analytic result is the limit process \mbox{${\cal M}_a=\lim_{t\to\infty} X_t$}.
\begin{defin}[The limit process $\mathcal{M}_a$]\label{defLimitProcess}
Define the extended kernel
\begin{equation}
\begin{aligned}
&K_{a}(u_1,\xi_1;u_2,\xi_2)=-\frac{1}{\sqrt{4\pi(u_2-u_1)}}\exp\left(-\frac{(\xi_2-\xi_1)^{2}}{4(u_2-u_1)}\right)\Id(u_2>u_1)
\\&+\frac{-1}{(2\pi\I)^{2}}\int_{\gamma_{+}}\dx w\int_{\gamma_{-}}\dx z
\frac{e^{w^{3}/3+(u_2+a)w^{2}-\xi_2 w}}{e^{z^{3}/3+(u_1+a)z^{2}-\xi_1 z}}\frac{2w}{(z-w)(z+w)}
\\&+\frac{1}{(2\pi\I)^{2}}\int_{\Gamma_{+}} \dx w\int_{\Gamma_-} \dx z
\frac{e^{w^3/3+(u_2-a) w^2-w(\xi_2+4a u_2)+4 u_2 a^2}}{e^{z^3/3+(u_1-a) z^2-z(\xi_1+4a u_1)+4 u_1 a^2}}\frac{2(w-2a) }{(z+w)(z-w+4a)}.
\end{aligned}
\end{equation}
The curves can be chosen as follows. Let \mbox{$\theta=\max\{|u_1|+|a|,|u_2|+|a|\}$}. For any choice of $r_\pm,R_\pm$ satisfying $r_+> -r_-> \theta$ and \mbox{$-R_->R_+>\theta+4|a|$}, we can set $\gamma_\pm=r_\pm+\I\R$ and $\Gamma_\pm=R_\pm+\I\R$ (oriented with increasing imaginary parts). The limit process $\mathcal{M}_{a}$ is defined by its finite-dimensional distribution: for any given $u_1<u_2<\cdots<u_m$,
 \begin{equation}\label{eq2.9}
 \Pb\left(\bigcap_{k=1}^{m}\{\mathcal{M}_{a}(u_k)\leq s_k\}\right)
 =\det(\Id-\chi_{s} K_a \chi_s)_{L^{2}(\{u_1,\ldots,u_m\}\times\mathbb{R})}
 \end{equation}
 where $\chi_{s}(u_k, x)=\Id(x>s_k)$. An explicit expression of $K_a$ in terms of Airy functions is given in Appendix~\ref{SectAppendix}.
\end{defin}

With this definition we can state our main analytic result, proven in Section~\ref{SectAsympt}.
\begin{thm}\label{ThmMain}
It holds
\begin{equation}
\lim_{t\to \infty}X_{t}(u)=\mathcal{M}_{a}(u)
\end{equation}
in the sense of finite dimensional distributions.
\end{thm}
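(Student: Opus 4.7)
The approach is to rewrite the joint law of particle positions as a Fredholm determinant at finite $t$ and then perform a steepest-descent analysis of the kernel. Since the initial condition fills $2\Z$ and the dynamics preserves order, the joint distribution
\begin{equation}
\Pb\Big(\bigcap_{k=1}^{m}\{x_{n_k}(t)\ge s_k\}\Big)
\end{equation}
can be written as $\det(\Id-\chi K_t \chi)_{L^2(\{n_1,\dots,n_m\}\times\Z)}$ via the Sch\"utz-type biorthogonal ensemble construction for TASEP with particle-dependent rates. The kernel $K_t$ admits a double contour-integral representation whose integrand carries an exponential factor of the form $e^{tG_t(w)-tG_t(z)}$, with $G_t$ determined by the rates $\{v_j\}$ and by the labels $n_k$; the two jump rates enter through products $\prod_{j\le 0}(\alpha-z)^{-1}\prod_{j>0}(1-z)^{-1}$ (and the analogous $w$-product in the numerator), together with a Gaussian transition term that makes the kernel extended. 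The first task is therefore to bring $K_t$ into a form well suited to the joint scaling \eqref{eq2.6}, \eqref{alphascaling}.

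After inserting the scalings, $G_t$ has two competing critical points: $z=-1/2$ coming from the normal-particle product, and $z=-\alpha/2=-1/2+a(t/2)^{-1/3}$ coming from the $\alpha$-particle product. The core of the argument is then a steepest descent with the rescaling $z=-1/2+Z(t/2)^{-1/3}$ and $w=-1/2+W(t/2)^{-1/3}$. Expanding $tG_t$ to cubic order produces Airy-type exponents $W^{3}/3+(u+a)W^{2}-\xi W$ on the contours $\gamma_\pm$ and a shifted version on $\Gamma_\pm$, the extra $4a$ translations encoding exactly the $a(t/2)^{-1/3}$ offset of the second saddle from the first. The rational prefactors $2w/((z-w)(z+w))$ and $2(w-2a)/((z+w)(z-w+4a))$ are residues collected when the $z$-contour is deformed across the poles $z=\pm w$ (respectively $z=\pm w+4a$), poles that appear naturally when the two rate-products are combined by partial fractions. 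The Gaussian semigroup $-\frac{1}{\sqrt{4\pi(u_2-u_1)}}\exp(-(\xi_2-\xi_1)^{2}/4(u_2-u_1))$ is the universal free-propagator contribution of the extended kernel on the diffusive scale between the labels $n(u_1,t)$ and $n(u_2,t)$.

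The main obstacle is that in the critical window $1-\alpha=\Or(t^{-1/3})$ the two saddles lie within one steepest-descent neighborhood of each other and cannot be resolved independently: no contour deformation treats each product on its own steepest-descent path without crossing the poles $z=\pm w$ (or $z=\pm w+4a$), and the residues collected at those crossings are precisely what encodes the shock structure. A naive deformation would collapse the answer to a single Airy$_2$ term and miss the interaction responsible for the $F_{\rm GOE}^{2}$ limit in the decoupling regime $a\to\infty$. Keeping both saddles simultaneously, choosing the contours $\gamma_\pm,\Gamma_\pm$ so that the cubic exponents have negative real part on them (which is where the constraints $r_+>-r_->\theta$ and $-R_->R_+>\theta+4|a|$ originate), and verifying that the two double integrals in $K_a$ capture all leading-order residues are the main technical work. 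Finally, promoting pointwise convergence of $K_t$ to convergence of the Fredholm determinant in \eqref{eq2.9} requires uniform sub-Gaussian bounds on the rescaled integrand along the vertical contours; these bounds are of the standard Airy-type form and allow one to invoke dominated convergence inside the Fredholm expansion, yielding the claimed finite-dimensional convergence of $X_t$ to $\mathcal{M}_a$.
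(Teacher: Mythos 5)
Your overall strategy (finite-time Fredholm determinant, steepest descent with a critically spaced pair of saddle/pole, then dominated convergence) is the right one and matches the paper in outline, but there is a genuine gap at the very first step. You claim the joint law can be written directly as a Fredholm determinant ``via the Sch\"utz-type biorthogonal ensemble construction for TASEP with particle-dependent rates'' with the two rates entering as $\prod_{j\le 0}(\alpha-z)^{-1}\prod_{j>0}(1-z)^{-1}$. Those are infinite products over all $j\le 0$ and all $j>0$, which do not converge: there is no direct Sch\"utz/biorthogonal formula for a configuration with infinitely many slow \emph{and} infinitely many normal particles. The paper (Proposition~\ref{fdetform}) circumvents this by starting from the \cite{BFS09} result for a finite number $M$ of slow particles, for which a kernel $K^{M}=-\phi+K^{1}+K^{2,M}$ is available, and then taking the $M\to\infty$ limit. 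That limit is not automatic: one needs (i) pointwise convergence $K^{M}\to K$ (which happens because for $M$ large the pole at $w=0$ disappears and the $w$-contour residue at $w=\alpha-1-v$ can be computed explicitly), and (ii) justification that the limit commutes with the Fredholm series, which the paper gets by showing the kernel vanishes for $x_i<-2n_m$ (so the sums are effectively finite) together with a Hadamard bound. Your argument as written does not produce a well-defined finite-time kernel, so the rest of the steepest-descent analysis has nothing to act on.

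Two lesser remarks. First, the finite-time kernel is a sum $-\phi+K^{1}+K^{2}$ and the paper analyzes the three pieces separately: $\phi$ yields the heat kernel, $K^{1}$ (which is $\alpha$-independent) yields the first double integral of $K_a$ by citing the ${\cal A}_{2\to1}$ analysis of \cite{BFS07}, and $K^{2}$ (which carries all the $\alpha$-dependence) yields the second; your description of the rational prefactors $2w/((z-w)(z+w))$ and $2(w-2a)/((z+w)(z-w+4a))$ as ``residues collected when the $z$-contour is deformed across poles'' is a plausible heuristic, but in the paper these prefactors are simply the $t\to\infty$ limits of the corresponding rational factors already present in $K^{1}$ and $K^{2}$ under the rescaling $v=-1/2+(V-2a)/(4t)^{1/3}$, $u=-1/2+(U-2a)/(4t)^{1/3}$ — no additional residues are picked up in that step. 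Second, the ``two competing critical points'' picture (at $-1/2$ and $-\alpha/2$) is a useful intuition, but the paper's steepest-descent contours pass through the single critical point $-1/2$ of $f_0$, and the effect of $\alpha$ appears as poles at $\Or(t^{-1/3})$ from $-1/2$; these are accommodated by a local $t^{-1/3}$-scale modification of the contour (see Figure~\ref{FigPaths1}), not by a separate saddle analysis. The rest of your sketch — the cubic expansion to Airy-type exponents, the origin of the $4a$ shifts, and the need for integrable decay bounds on the rescaled kernel to push convergence through the Fredholm expansion — is consistent with Propositions~\ref{PropPointwiseConv}, \ref{PropBoundPhi}, \ref{PropModDev}, \ref{PropLargeDev}.
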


\begin{remark}
In some special cases or limits we recover previous known processes. For example:
\begin{itemize}
\item[(a)] For $a=0$ we have the flat TASEP and see the Airy$_1$ process~\cite{BFPS06,Sas05}: ${\cal M}_0(u)=2^{1/3}{\cal A}_1(2^{-2/3} u)$.
\item[(b)] When $a\to -\infty$ a rarefaction fan is created. At his left edge, ${\cal M}_a$ becomes the Airy$_{2\to 1}$ process~\cite{BFS07}: $\lim_{a\to -\infty}{\cal M}_a(u-a) = {\cal A}_{2\to 1}(u)$. Inside the rarefaction fan ${\cal M}_a$ becomes the Airy$_2$ process~\cite{PS02}. For instance, in the middle of the rarefaction fan: $\lim_{a\to -\infty}{\cal M}_a(u)+(u+a)^2 = {\cal A}_{2}(u)$.
\item[(c)] For $a>0$, there is a shock and ${\cal M}_a$ is a transition process between two ${\cal A}_1$ processes. Indeed, $\lim_{M\to\infty}{\cal M}_a(u\pm M) = 2^{1/3}{\cal A}_{1}(2^{-2/3} u)$.
\end{itemize}
\end{remark}
Further, when $a\to\infty$ one should recover the macroscopic shock picture and by the result of~\cite{FN14} the one-point distribution should become a product of two $F_1$ distributions, where $F_1$ is the GOE Tracy-Widom distribution discovered first in random matrix theory~\cite{TW96}. More precisely, extrapolating the result of Corollary~2.5 in~\cite{FN14} (reported below for convenience) we conjecture that
\begin{equation}\label{eqConj}
\lim_{a\to\infty} \Pb({\cal M}_a(0)\leq s)=(F_1(2^{2/3}s))^2.
\end{equation}
As the kernel $K_a$ is not converging pointwise as $a\to\infty$ (not even after an appropriate conjugation) we could not verify the conjecture analytically. The numerical studies presented in Section~\ref{SectNumerics} below are in agreement with the conjecture and, moreover, show that the convergence as $a$ increases is quite fast.

\begin{thm}[Corollary 2.5 in~\cite{FN14}]
Let $x_n(0)=-2n$ for $n\in\Z$. For $\alpha<1$ let $\eta=\frac{2-\alpha}{4}$ and $v=-\frac{1-\alpha}{2}$. Then it holds
\begin{equation}\label{eqCor1b}
\lim_{t\to\infty}\Pb\left(x_{\eta t}(t)\geq v t - s t^{1/3}\right)=F_1\left(2s\right)F_1\left(2s\sigma_\alpha\right),
\end{equation}
with $\sigma_\alpha=\frac{\alpha^{1/3}(2-2\alpha+\alpha^2)^{1/3}}{(2-\alpha)^{2/3}}$. Note that $\sigma_{\alpha}\to 1$ as $\alpha\to 1$.
\end{thm}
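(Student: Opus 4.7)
The plan is to express $\Pb(x_{\eta t}(t) \geq vt - st^{1/3})$ as a Fredholm determinant $\det(\Id - K_t)_{L^2(\R_{>0})}$ using the determinantal structure of TASEP with particle-dependent jump rates, and then perform a steepest-descent analysis on $K_t$. An appropriate kernel can be obtained by adapting the Borodin--Ferrari--Prähofer--Sasamoto construction for flat TASEP~\cite{BFPS06,Sas05} to the mixed-rate initial data~\eqref{twospeedTASEP}: summing separately over $\alpha$-particles on the right and over normal particles on the left, one arrives at a double contour integral whose phase splits into a rate-$1$ piece and a rate-$\alpha$ piece.

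Under the scaling $n=\eta t$, $x\sim vt - st^{1/3}$ with $\eta=(2-\alpha)/4$ and $v=-(1-\alpha)/2$, the phase has two distinct saddle points $w_1^\ast$ and $w_\alpha^\ast$ at $\Or(1)$ distance from each other, corresponding to the two characteristics meeting at the shock: the left characteristic (speed $0$) emanating at time $0$ from position $(\alpha-1)t/2$ inside the rate-$1$ region, and the right characteristic (speed $\alpha-1$) emanating from position $(1-\alpha)t/2$ inside the rate-$\alpha$ region. These starting locations are separated by a distance of order $t$, which is the probabilistic origin of the asymptotic decoupling: the particle $x_{\eta t}(t)$ effectively depends on two essentially disjoint portions of the Poisson environment.

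Concretely, I would deform the $w$- and $z$-contours into the union of two steepest-descent arcs through $w_1^\ast$ and $w_\alpha^\ast$, and localize via cut-off functions $\chi_1,\chi_\alpha$ supported in small disks around each saddle. The kernel decomposes as $K_t \approx K_t^{(1)} \oplus K_t^{(\alpha)}$ modulo cross terms that carry exponential factors of the form $e^{t(f_1(w_1^\ast) - f_\alpha(w_\alpha^\ast))}$ or its reciprocal. Controlling these cross terms in trace norm so that their contribution to the Fredholm determinant is negligible is the main technical obstacle: it requires uniform Gaussian tail estimates on the deformed contours together with the strict ordering of the two saddle values (which breaks down exactly when $\alpha\to 1$, consistent with the critical regime studied in the present paper).

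Once the decoupling is in place, local rescaling by $t^{1/3}$ around $w_1^\ast$ turns $K_t^{(1)}$ into the Airy$_1$ kernel with scaling factor yielding $F_1(2s)$, while $K_t^{(\alpha)}$ around $w_\alpha^\ast$ converges to the Airy$_1$ kernel but with a rate-$\alpha$ scaling factor computed from $\sqrt[3]{f_\alpha'''(w_\alpha^\ast)/f_1'''(w_1^\ast)}$; a direct computation of these third derivatives at the saddles then matches $\sigma_\alpha = \alpha^{1/3}(2-2\alpha+\alpha^2)^{1/3}/(2-\alpha)^{2/3}$, giving the factor $F_1(2s\sigma_\alpha)$. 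The product $F_1(2s)F_1(2s\sigma_\alpha)$ then follows from $\det(A\oplus B)=\det(A)\det(B)$ applied to the limiting block-diagonal kernel.
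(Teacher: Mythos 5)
The paper itself does not prove this statement; it restates Corollary~2.5 of \cite{FN14}, whose proof is probabilistic and goes through the last passage percolation picture, not a Fredholm-determinant steepest descent. In \cite{FN14} one uses the identity \eqref{eqLPPtasep} to write the event as $\{L_{{\cal L}\to(m,n)}\leq t\}$, decomposes $L_{{\cal L}\to(m,n)}=\max(L^{\rm left},L^{\rm right})$ over the parts of the starting line ${\cal L}$ sitting in the rate-$1$ rows and in the rate-$\alpha$ rows respectively, and then proves that $L^{\rm left}$ and $L^{\rm right}$ are asymptotically independent: the two maximizers live in macroscopically disjoint regions of the plane, and the shared portion near the endpoint is controlled by slow decorrelation so that it contributes only $o(t^{1/3})$. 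Each marginal is then identified as a GOE Tracy--Widom law with the appropriate KPZ scaling coefficient (this is where $\sigma_\alpha$ comes from), and the product $F_1(2s)F_1(2s\sigma_\alpha)=\Pb(L^{\rm left}\leq t)\Pb(L^{\rm right}\leq t)$ follows. This is a genuinely different mechanism from the kernel decomposition you propose.

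Your proposal has a concrete gap at the last step. Two saddle points of the same double-contour integral give, at best, a pointwise splitting $K_t\to K^{(1)}+K^{(\alpha)}$ as a \emph{sum} of operators on the same $L^2(\R_{>0})$, not a direct sum on complementary subspaces. A Fredholm determinant of a sum, $\det(\Id-K^{(1)}-K^{(\alpha)})$, does not factor as $\det(\Id-K^{(1)})\det(\Id-K^{(\alpha)})$ unless the mixed operator products $K^{(1)}K^{(\alpha)}$ and $K^{(\alpha)}K^{(1)}$ vanish, and nothing in the saddle geometry provides that: both limiting kernels are Airy-type kernels living on the same half-line of $\xi$-variables. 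The cross-term estimate you invoke is also internally inconsistent: at the shock the two phase values must coincide at order $t$ (this is exactly why both saddles contribute at the same $t^{1/3}$-scale), so the factors $e^{\pm t(f_1(w_1^\ast)-f_\alpha(w_\alpha^\ast))}$ are not exponentially small; and if the saddle values were strictly ordered, one of the two reciprocal factors would blow up rather than decay. The authors in fact flag this very obstruction in the present paper: even with the fully explicit kernel $K_a$, they could not recover the $F_1^2$ product from the Fredholm determinant analytically (``the kernel $K_a$ is not converging pointwise as $a\to\infty$, not even after an appropriate conjugation''). The LPP/maximum decomposition, not a kernel factorization, is what makes Corollary~2.5 of \cite{FN14} work.
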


\subsection{Last passage percolation}\label{SectLPP}
The limit process ${\cal M}_a$ occurs in a related last passage percolation (LPP) model as well. To each site $(i,j)$ of $\mathbb{Z}^{2}$ we assign an independent random variable $\omega_{(i,j)}$ with
\begin{equation}
\omega_{(i,j)}\sim \exp(v_j).
\end{equation}
Further, to a TASEP initial condition $\{x_{n}(0),n\in\Z\}$ we assign the line
\begin{equation}
{\cal L}=\{(n+x_{n}(0),n)\,|\, n\in\Z\}.
\end{equation}
For a point $(m,n)$ on the right/above the line $\cal L$, the last passage time from ${\cal L}$ to $(m,n)$ is defined by
\begin{equation}
L_{{\cal L}\to (m,n)}:=\max_{\pi:{\cal L} \to (m,n)} \sum_{(i,j)\in \pi}\omega_{(i,j)},
\end{equation}
where the maximum is taken over all up-right paths\footnote{An up-right path $\pi=\{\pi(0),\ldots,\pi(n)\}$  is a sequence of points of $\Z^2$ such that \mbox{$\pi(i+1)-\pi(i)\in\{(1,0),(0,1)\}$}, for $i=0,\ldots,n-1$.} from $\cal L$ to $(m,n)$.
The well-known connection between TASEP and LPP is
\begin{equation}\label{eqLPPtasep}
\Pb\bigg(\bigcap_{k=1}^{r}\{ x_{n_{k}}(t)\geq m_{k}-n_{k}\}\bigg)=\Pb\bigg(\bigcap_{k=1}^r \{L_{{\cal L}\to (m_k,n_k)}\leq t\}\bigg).
\end{equation}
In our model, we have ${\cal L}=\{(-n,n), n\in\Z\}$. Consider the critical scaling
\begin{equation}
\alpha=1-2a(2\ell)^{-1/3},
\end{equation}
and focus at the position
\begin{equation}
m(v,\ell)=\ell-2(v+a)(2\ell)^{2/3},\quad n=\ell.
\end{equation}
Define the rescaled LPP time by
\begin{equation}
L^{\rm resc}_\ell(v):=\frac{L_{{\cal L}\to (m(v,\ell),\ell)}-\left[4 \ell - 4 (v+a)(2\ell)^{2/3}\right]}{2(2\ell)^{1/3}}.
\end{equation}
\begin{thm}\label{ThmMainLPP}
It holds
\begin{equation}
\lim_{\ell\to \infty}L^{\rm resc}_\ell(v)=\mathcal{M}_{a}(v)
\end{equation}
in the sense of finite dimensional distributions.
\end{thm}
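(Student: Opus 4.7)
The plan is to derive Theorem~\ref{ThmMainLPP} from Theorem~\ref{ThmMain} via the TASEP--LPP duality \eqref{eqLPPtasep}, combined with a slow decorrelation step to reconcile the different parametrizations used by the two theorems.

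For the one-point statement I would proceed by direct change of variables. Setting
$$t(s,v,\ell) := 4\ell - 4(v+a)(2\ell)^{2/3} + 2s(2\ell)^{1/3},$$
the definition gives $\{L^{\rm resc}_\ell(v)\le s\}=\{L_{{\cal L}\to(m(v,\ell),\ell)}\le t(s,v,\ell)\}$, and \eqref{eqLPPtasep} rewrites this as $\{x_\ell(t(s,v,\ell))\ge -2(v+a)(2\ell)^{2/3}\}$. Solving $\ell = n(u,t(s,v,\ell))$ for $u$ and using the identity $(a+u)(t/2)^{2/3}=\ell-t/4$ yields $u=v+O(\ell^{-1/3})$ and rewrites the event as $\{X_{t(s,v,\ell)}(u)\le s\cdot(2\ell)^{1/3}/(t(s,v,\ell)/2)^{1/3}\}$, in which the prefactor tends to $1$. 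Since $t(s,v,\ell)\to\infty$, Theorem~\ref{ThmMain} (together with continuity of $\mathcal{M}_a$) delivers the one-point convergence.

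For the joint distribution at $v_1<\cdots<v_r$, the pointwise duality yields
$$\Pb\Bigl(\bigcap_k\{L^{\rm resc}_\ell(v_k)\le s_k\}\Bigr)=\Pb\Bigl(\bigcap_k\{x_\ell(t_k)\ge m(v_k,\ell)-\ell\}\Bigr)$$
with $t_k:=t(s_k,v_k,\ell)$, and since the $t_k$'s differ at order $\ell^{1/3}$, the right-hand side is a multi-time TASEP event, not the fixed-time object handled by Theorem~\ref{ThmMain}. To collapse onto a single time, I would invoke slow decorrelation along the LPP characteristic through the shock region: replace each endpoint $(m(v_k,\ell),\ell)$ by a point $(\tilde m_k,\tilde n_k)$ further along the same characteristic, chosen so that the dual TASEP event lies at one common large time $T=T(\ell)$ (for instance, $\tilde n_k=n(v_k,T)$). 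Slow decorrelation asserts that $L_{{\cal L}\to(\tilde m_k,\tilde n_k)}-L_{{\cal L}\to(m(v_k,\ell),\ell)}$ is deterministic up to $o(\ell^{1/3})$ fluctuations, so the difference is absorbed into the recentering and the joint event becomes a fixed-time finite-dimensional TASEP event, to which Theorem~\ref{ThmMain} applies directly.

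The main obstacle is the slow decorrelation estimate in this inhomogeneous LPP: the column rates change value at $j=0$, so the characteristic through the shock region is piecewise linear and the standard one-speed arguments do not transfer verbatim. The input required is a uniform one-point upper-tail bound on $L_{{\cal L}\to(m,n)}$ in a neighbourhood of the characteristic, which is available from the Fredholm determinant and contour estimates underlying the proof of Theorem~\ref{ThmMain}; given these bounds, the slow decorrelation step itself is the now-standard $L^2$/coupling argument.
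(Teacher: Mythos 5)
Your approach is essentially the same as the paper's: it also proves Theorem~\ref{ThmMainLPP} by combining the TASEP--LPP duality \eqref{eqLPPtasep} with slow decorrelation along the characteristics to reduce to the fixed-time statement of Theorem~\ref{ThmMain}, and it likewise refers to \cite{Fer08,CFP10b,BFP09,CFP10a} for the now-standard details. One small slip: since $t_k=t(s_k,v_k,\ell)$ depends on $v_k$, the $t_k$ differ at order $\ell^{2/3}$ rather than $\ell^{1/3}$, though this is harmless because slow decorrelation covers any $o(\ell)$ window along the characteristic; also, the inhomogeneity concern you flag is milder than it appears, since moving $(m(v_k,\ell),\ell)$ forward along the characteristic only adds a point-to-point LPP piece lying entirely in the rate-$1$ region, so the usual superadditivity plus one-point tightness argument applies verbatim.
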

In short, to prove Theorem~\ref{ThmMainLPP}, one starts with the relation (\ref{eqLPPtasep}) that gives the joint distributions of $L^{\rm resc}_\ell$ in terms of positions of TASEP particles at different times, varying around $t=4\ell$ on a $\ell^{2/3}$ scale only. By the slow-decorrelation phenomenon~\cite{Fer08,CFP10b} the fluctuations at different times are asymptotically the same as the fixed time fluctuations for points lying on special space-time directions (the characteristics). At fixed time, the result is exactly given by Theorem~\ref{ThmMain}. The details of this procedure have been worked out for instance in~\cite{BFP09,CFP10a}.

\subsection{Numerical study}\label{SectNumerics}
Here we numerically compute the distribution function of the process ${\cal M}_a$, given by a Fredholm determiant of the kernel $K_a$, as well as some of its basic statistics: Expectation, Variance, Skewness, Kurtosis. For the computation, we use the formula for $K_a$ given in the Appendix~\ref{SectAppendix}, since Airy functions are already implemented functions in Matlab, and apply Bornemann's method for the evaluation of the Fredholm determinants, see~\cite{Born10}, which is well-adapted for analytic kernels. Bornemann's algorithm also comes with an error control that we used, see Section 4.4 of~\cite{Born10}.
 \begin{figure}
 \begin{center}
 \includegraphics[height=6cm]{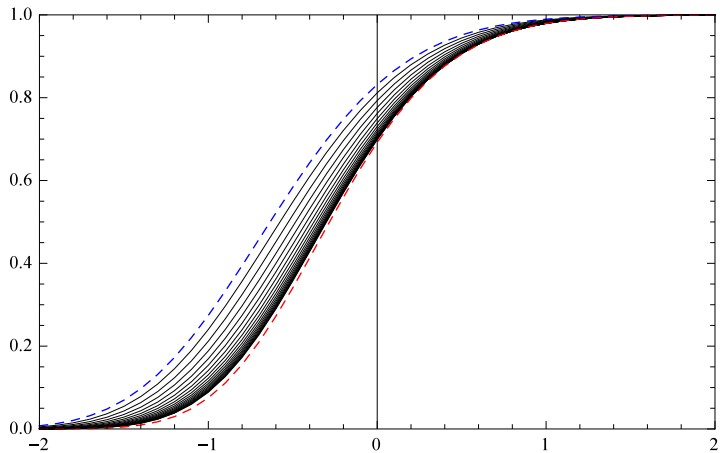}
 \caption{The dashed lines are $s\mapsto F_{1}(2s)$ (the left one) and \mbox{$s\mapsto F_{1}(2s)^{2}$} (the right one), $s\in [-2,2]$. The solid  lines are the functions $s\mapsto G_{a}(s)$ for \mbox{$a=0.1,0.2,\ldots,1.8$}. For $a=0$, $G_0(s)=F_1(2s)$, when $a$ grows larger, $G_a(s)$ approximates the macroscopic shock distribution $F_{1}(2s)^{2}$ as conjectured, see~\eqref{eqConjEquiv}.}
\label{FigDistr}
 \end{center}
 \end{figure}

For simplicity, we  study the validity of the conjecture in \eqref{eqConj}, i.e., we set $u=0$. In principle, one could look also general $u$, but then it has to be taken as a function of $a$ too (since in the unscaled process the correlation scale of the process changes from $t^{2/3}$ to $t^{1/3}$ as $\alpha$ varies from $1$ to a value strictly less then $1$).

To avoid to carry around a lot of $2^{1/3}$ factors, we rescale space by a factor $2^{1/3}$ so that the conjecture \eqref{eqConj} writes
\begin{equation}\label{eqConjEquiv}
\lim_{a\to\infty} \Pb({\cal M}_a(0)\leq s 2^{1/3})=(F_1(2s))^2.
\end{equation}
We denote $\widetilde{K}_{a}(\xi_1,\xi_2):=2^{1/3}K_{a}(0,2^{1/3}\xi_1;0,2^{1/3}\xi_2)$. Remark that the special case $a=0$ we have the Airy$_1$ kernel, $\widetilde{K}_{0}(\xi_1,\xi_2)=\Ai(\xi_1+\xi_2)$. By (\ref{eq2.9}) we have
\begin{align}
G_{a}(s):=\Pb({\cal M}_a(0)\leq s 2^{1/3})=\det(\Id-\chi_{s}^{c} \widetilde{K}_a \chi_{s}^{c}),\quad \chi_{s}^{c} =\Id_{(s,\infty)}.
\end{align}

By Theorem~\ref{ThmMain}, this is the $t\to\infty$ limit of the rescaled position of a particle in the microscopic shock. As mentioned earlier, we let $a$ grow large so as to recover the macroscopic shock distribution, which for large but finite time $t$ would correspond to the choice $a=\frac{(1-\alpha)t^{1/3}}{2^{4/3}}$. Due to the numerical limitations discussed below, we will compute $G_a$ and its basic statistics up to $a=1.8$. Surprisingly, already for this relatively small value of $a$, one is already quite close to the asymptotic behavior. The reason for this is the following. At first approximation, from the KPZ scalings, we know that the randomness that influences the statistical properties of particle positions around the shock lives in a $t^{2/3}$ neighborhood of the characteristic lines that comes together at the shock (for a proof in a special case, see~\cite{Jo00b}) and the neighborhood should be quite tight to provide the super-exponential decay of the covariance for the Airy$_1$ process~\cite{BFP08}. Further, by a closer inspection near the end-points, we discover that at $t^{1/3}$ distance from the shock, the neighborhood is only of order $t^{1/3}$ as well~\cite{FN14,FS03b}. These two phenomena imply that the convergence will happen on $a$ of order $1$.

\subsubsection*{Numerical Limitations} The limitation to $a\leq 1.8$ is due to the numerical difficulty of evaluating $G_a$ for $a$ large. As $a$ grows, $\widetilde{K}_a$ has some terms which are of order $1$ and one term which is (super-) exponentially diverging. More precisely, one has
\begin{equation}
\begin{aligned}
&\eqref{divstrong}=e^{4a^{3}/3-2a(\xi_1+\xi_2)-(\xi_2-\xi_1)^{2}/16a-\ln(4\sqrt{\pi a})}+\varepsilon_{0}(a)
\\&\eqref{div1}=2^{-1/3}\mathrm{Ai}(2^{-1/3}(\xi_1+\xi_2))e^{-a(\xi_2-\xi_1)}+\varepsilon_{1}(a)
\\&\eqref{div2}=2^{-1/3}\mathrm{Ai}(2^{-1/3}(\xi_1+\xi_2))e^{-a(\xi_1-\xi_2)}+\varepsilon_{2}(a)
\\&
\left|\eqref{vanishing}\right|\leq c_{i}\max_{\lambda\geq 0} \mathrm{Ai}(\lambda+\xi_i+a^{2}),\quad i=1,2.
\end{aligned},
\end{equation}
where $|\varepsilon_{0}(a)|\leq 1/4a$, and, for $i=1,2$, $|\varepsilon_{i}(a)|\leq \max_{\lambda\geq 0} \mathrm{Ai}(\lambda+\xi_i+a^{2})/a$ and $c_{i}=\int_{0}^{\infty}\dx \lambda \mathrm{Ai}(\xi_{3-i}+a^{2}+\lambda)$.
This implies that when $a$ increases, the ratio between the bounded terms and the large term becomes smaller than $10^{-16}$ machine precision and no reliable numerical evaluation is possible. In our case, already for $a\geq 4$, $\widetilde{K}_a$ (much less $G_a$) can no longer be computed in Matlab. For instance, Matlab computes $G_3(s)={\rm NaN}$ for all tested $s$, $G_{2.5} (-1)= 0.0838$ with an error $0.0044$, whereas $G_{1.8}(-2)=1.4879\cdot 10^{-4}$, with an error $5.6831\cdot 10^{-9}$. We present the numerical computations until $a=1.8$, since for higher values the error term in the Kurtosis becomes visible.

Generally, the computational error of $G_a(s)$ decreases as $s$ increases since then $\widetilde{K}_{a}(\xi_1,\xi_2)$ needs to be computed only for small entries and the evaluation of $G_{a}$ is easier (namely, the matrix whose determinant approximates $G_a$ gets closer to the Identity matrix as $s$ increases, see (4.3) in \cite{Born10}). The statistics of $G_a$ were computed using the \texttt{chebfun} package (see~\cite{BF04}), in which $G_a$ is represented by its polynomial interpolant in Chebyshev points, for our choice in $n=4096$ points.

In Figure~\ref{FigDistr} we plot $F_1(2s)=G_0(s)$, $G_a(s)$ for $a\in \{0.1,0.2,\ldots,1.8\}$ and the conjectured $a\to\infty$ limit, namely $(F_1(2s))^2$.
A property which is apparent from Figure~\ref{FigDistr} is that $G_{a}$ monotonically decreases towards $F_{1}(2s)^{2}$ as $a$ grows. Indeed, for all $a,a'\in\{0,\ldots,1.8\}$, and $s\in\{-2,-1.9,\ldots,2\}$ we have
\begin{equation} \label{evid1}
G_{a}(s) > F_{1}(2s)^{2}, \quad G_{a'} (s) < G_{a}(s)\quad {\rm if}\quad a < a'.
\end{equation}
An analytic proof of this property does not seem to be trivial and is not available so far.

To further quantify the difference between $F_{1.8}$ and $F_{1}(2\cdot)^{2}$ we computed
\begin{equation} \label{aprx}
D(a):=\max_{s=-2,-1.9,\ldots,2} |F_{1}(2s)^{2}-G_{1.8}(s)|.
\end{equation}
 \begin{figure}
 \begin{center}
 \includegraphics[height=5cm]{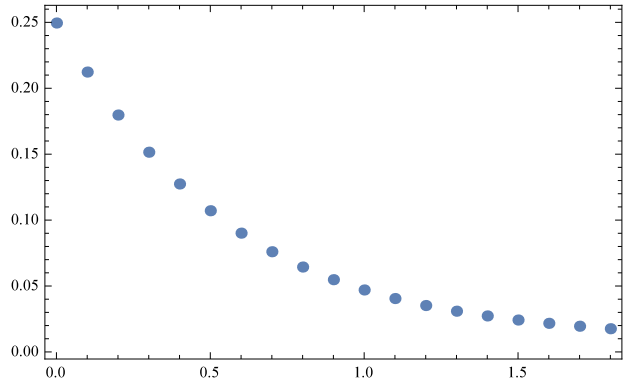}
 \caption{Plot of the function $a\mapsto D(a)$ defined by \eqref{aprx}. On this limited interval width for $a$ the convergence of the difference of the distribution functions close to exponentially fast.}\label{FigMaxDiff}
 \end{center}
 \end{figure}

\eqref{evid1} and \eqref{aprx} are compatible with the conjecture \eqref{eqConjEquiv}, but to have a further more reliable verification we study numerically the basis statistics too. The reason is that the distribution functions might be optically close but still be different. For example,  the plots of the GUE and GOE Tracy-Widom distribution functions scaled to have both average $0$ and variance~$1$, are almost indistinguishable. However, by looking at their skewness and kurtosis one can clearly differentiate between them.

In Figure~\ref{moments} we plot the basic statistics of $G_a$ and compare them with those of $F_{1}(2\, \cdot)^{2}$. The approximation is fastest for the expectation, and slowest for the kurtosis (though the observation window for $a$ is too small to  quantify the different rates of convergence).
\begin{figure}
\begin{center}
 \includegraphics[height=5.3cm]{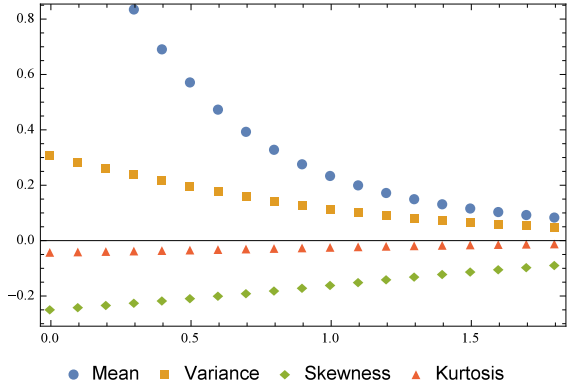}
 \caption{Relative difference between the basic statistics of $G_a$ and of the conjectured limiting distribution, $F_1(2\cdot)^2$.}
 \label{moments}
\end{center}
\end{figure}
Finally, let us resume in  Table~\ref{tab} the basic statistics of $G_a$ in comparison to $F_{1}(2\cdot)^{2}$.
\begin{table}
\begin{center}
 \begin{tabular}{ | l | l | l | l | l | l |}
\hline
& Expectation & Variance & Skewness & Kurtosis \\
\hline
$G_0$ & $-0.6033; 144\%$ & $0.4019; 31\%$ & $0.2935; -25\%$ & $3.1652; -4.3\%$ \\
\hline
$G_{0.3}$ & $-0.4524; 83\%$ & $0.3816; 24\%$ & $0.3028; -23\%$ & $3.1811; -3.9\%$ \\
\hline
$G_{0.6}$ & $-0.3632; 47\%$ & $0.3624; 17\%$ & $0.3127; -20\%$ & $3.1988; -3.3\%$ \\
\hline
$G_{0.9}$ & $-0.3145; 27\%$ & $0.3466; 13\%$ & $0.3240; -17\%$ & $3.2184; -2.7\%$ \\
\hline
$G_{1.2}$ & $-0.2889; 17\%$ & $0.3353; 8.9\%$ & $0.3359; -14\%$ & $3.2377; -2.1\%$ \\
\hline
$G_{1.5}$ & $-0.2751; 11\%$ & $0.3277; 6.4\%$ & $0.3469; -11\%$ & $3.2540; -1.6\%$ \\
\hline
$G_{1.8}$ & $-0.2670; 8.2\%$ & $0.3226; 4.7\%$ & $0.3561; -9.1\%$ & $3.2658; -1.3\%$ \\
\hline
$\boldsymbol{F_{1}(2\cdot)^{2}}$& $-0.2468$ & $0.3080$ & $0.3917$ & $3.3086$ \\
\hline
\end{tabular}
\caption{Data of the basic statistics and their relative difference to the conjectured limit distribution $F_1(2\,\cdot)^2$ for a few values of $a$.}\label{tab}
\end{center}
\end{table}

\section{Asymptotic analysis - Proof of Theorem~\ref{ThmMain}}\label{SectAsympt}
In Section~\ref{SectFiniteKernel} we derive the finite time kernel, whose Fredholm determinant gives us the joint distributions of TASEP particle positions, see Proposition~\ref{fdetform}. For the derivation we first need to consider the case of a finite number $M$ of $\alpha$-particles and then take the $M\to\infty$ limit. In Section~\ref{SectScalingLimit} we then perform the asymptotic analysis and complete the proof of Theorem~\ref{ThmMain}.

\subsection{Finite time formula}\label{SectFiniteKernel}
Taking the limit of the situation with finitely many slow particles we obtain the following result.
\begin{prop}\label{fdetform}
Consider Two-Speed TASEP as defined in \eqref{twospeedTASEP}. Then the joint distribution of the positions of $m$ normal particles with labels \mbox{$0< n_1 < n_2 < \ldots < n_m$} at time $t$ is given by
\begin{equation}
\Pb\left( \bigcap_{k=1}^{m} \{x_{n_k} (t) > s_k\} \right)=\det(1-\chi_s K \chi_s)_{\ell^{2}(\{n_1,\ldots,n_m\}\times\mathbb{Z})},
\end{equation}
with $K=-\phi+ K^{1}+K^{2}$, where $\chi_{s}(n_k,x)=\Id_{(-\infty,s_k]}(x)$ and\footnote{For a set $S$, the notation $\Gamma_{S}$ means a simple path anticlockwise oriented enclosing only poles of the integrand belonging to the set $S$.}
\begin{equation}
\phi(n_1,x_1;n_2,x_2)=\frac{1}{2\pi\I}\oint_{\Gamma_{0}}\frac{\dx w}{w}\frac{(w-1)^{n_1-n_2}}{w^{x_1-x_2+n_1-n_2}}\Id_{\{n_1 < n_2\}},
\end{equation}
\begin{equation}
\begin{aligned}
K^{1}(n_1,x_{1};n_2,x_{2})=&\frac{1}{(2\pi\I)^{2}}\oint_{\Gamma_0}\dx v\oint_{\Gamma_{0,-v}}\frac{\dx w}{w}\frac{e^{tw}(w-1)^{n_1 }}{w^{x_1 + n_1}}\\&
\times\frac{(1+v)^{x_2 + n_2 }}{e^{t(v+1)}v^{n_2}}\frac{1+2v}{(w+v)(w-v-1)},
\end{aligned}
\end{equation}
and
\begin{equation}
\begin{aligned}
K^{2}(n_1,x_{1}; n_2, x_{2})=&\frac{-1}{(2\pi\I)^2}\oint_{\Gamma_{0}}\dx w\oint_{\Gamma_{0,\alpha-1-w}}\dx v \frac{e^{tw}(w-1)^{n_1}}{w^{x_{1}+n_1+1}}
 \frac{(1+v)^{x_{2}+n_2}}{e^{t(v+1)}v^{n_2}}\\&\times\frac{1+2v}{(v+w+1-\alpha)(w-v-\alpha)}.
\end{aligned}
\end{equation}
\end{prop}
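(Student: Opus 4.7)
My approach is to first derive the formula for a truncated system with only finitely many $\alpha$-particles and then pass to the limit $M\to\infty$. Concretely, retain the $\alpha$-particles with labels $-M+1,\ldots,0$ together with the normal particles of labels $1,\ldots,N$ for $N\geq n_m$. This truncated system has $M+N$ particles, and by standard graphical-construction coupling its joint distribution of $(x_{n_1}(t),\ldots,x_{n_m}(t))$ converges as $M,N\to\infty$ to that of the original two-speed TASEP. For finite $M+N$, Sch\"utz's explicit formula for the TASEP transition density, combined with the Borodin--Ferrari--Sasamoto / BFPS determinantal machinery, rewrites the gap probability as a Fredholm determinant
\begin{equation*}
\det\!\bigl(1 - \chi_s K^{(M,N)} \chi_s\bigr), \qquad K^{(M,N)}(n_1,x_1;n_2,x_2) = -\phi(n_1,x_1;n_2,x_2) + \sum_{k=0}^{M+N-1} \Psi_k^{n_1}(x_1)\,\Phi_k^{n_2}(x_2),
\end{equation*}
where each $\Psi_k^{n}(x)$ has an explicit contour-integral representation built from the initial positions and jump rates, and $\Phi_k^{n}(x)$ is its biorthogonal dual.

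The heart of the argument is to identify $\Phi_k$ explicitly. The index $k$ splits naturally into two blocks according to whether it corresponds to an $\alpha$-particle or to a normal particle, and one makes a separate contour-integral ansatz for $\Phi_k$ in each block. Biorthogonality $\sum_{x\in\Z}\Psi_j^{n}(x)\Phi_k^{n}(x)=\delta_{jk}$ is then verified by residue calculus. Once $\Phi_k$ is known, the sum $\sum_k \Psi_k\otimes\Phi_k$ splits into the normal-particle contribution, which is $\alpha$-independent and yields $K^1$, and the $\alpha$-particle contribution, which is a finite geometric sum in $\alpha$; closing the latter into a single contour integral yields $K^2$. The poles at $v+w+1-\alpha$ and $w-v-\alpha$ in $K^2$ are exactly the residues encoding this geometric summation, and the contours $\Gamma_0$ and $\Gamma_{0,\alpha-1-w}$ are chosen to enclose the required poles.

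The final step is the passage $M\to\infty$. On the probability side, convergence follows from the graphical construction. On the kernel side, the $\alpha$-particle geometric sum has a uniformly summable tail (the exponential factors $e^{tw}$ and polynomial prefactors being tamed by the contour choices), so Hadamard-type bounds on the finite-rank piece $\sum_k \Psi_k\otimes\Phi_k$ imply trace-norm convergence of $\chi_s K^{(M,N)}\chi_s$ to $\chi_s(-\phi+K^1+K^2)\chi_s$, and hence convergence of the Fredholm determinants.

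The main obstacle I anticipate is identifying the correct ansatz for the duals $\Phi_k$ and verifying biorthogonality, because the two families of starting data produce different pole structures that must intertwine correctly through $\langle\Psi,\Phi\rangle=\delta$. Once the finite-$M$ kernel is known in closed form, the passage $M\to\infty$ is essentially technical but still requires careful uniform estimates on the integrands, which motivate the precise contour specifications in the statement.
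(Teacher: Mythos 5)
Your high-level plan coincides with the paper's: work with finitely many ($M$) slow particles and pass to the limit $M\to\infty$. However, the paper does not re-derive the finite-$M$ Fredholm formula; it quotes it directly from Borodin--Ferrari--Sasamoto (Two speed TASEP, Proposition~6), together with the pointwise kernel limit $K^{2,M}\to K^2$ (Corollary~8 there). Rebuilding the biorthogonal system from Sch\"utz's formula is not incorrect, but it repeats published work and is not where the content of this proposition lies.

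The genuine gap is in your $M\to\infty$ step. You invoke trace-norm convergence of $\chi_s K^{(M,N)}\chi_s$ from ``Hadamard-type bounds'' on the finite-rank piece and a ``uniformly summable tail,'' but $\chi_s$ restricts $x$ only to the half-line $(-\infty,s_k]$, so the Fredholm series a priori involves \emph{infinite} sums over $x_i\le s_i$, and neither a Hadamard bound nor summability of the $\alpha$-particle tail yields trace-class control of $\chi_s K^{(M,N)}\chi_s$ on $\ell^2$ by itself. The observation that actually makes the limit interchange legitimate --- and which your outline does not identify --- is that $K^1(n_1,x_1;\cdot)=0$ for $x_1<-2n_1$, $K^{2,M}(n_1,x_1;\cdot)=0$ for $x_1<-n_1$, and $\phi(n_1,x_1;n_2,x_2)$ also vanishes for $x_1$ sufficiently negative once $x_2$ is bounded below. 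Hence every $n\times n$ determinant in the Fredholm expansion vanishes as soon as some $x_i<-2n_m$ (physically: a particle with label $\le n_m$ can never be left of $x_{n_m}(0)=-2n_m$). This reduces all $x$-sums to \emph{finite} sums, after which the crude Hadamard bound $|\det[K^M(\cdot,\cdot)]|\le C^n n^{n/2}$ plus dominated convergence, combined with the pointwise kernel limit, completes the proof. Without this vanishing property your proposed limit argument does not go through as stated.
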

The system with finitely many slow particles has been already partially studied in~\cite{BFS09}. There, it was shown that the distribution function of particles positions is given by a Fredholm determinant and the kernel was given. For a fixed $M\in \N$, consider TASEP with initial conditions and jump rates given by
\begin{align} \label{f2spTASEP}
x_{n}^{M}(0)=2(M-n),\quad n\in \mathbb{N},\qquad v_{n}^{M}=
\left\{\begin{array}{ll}
1, & \textrm{for }n>M, \\
\alpha, & \textrm{for }1\leq n\leq M.
\end{array}\right.
\end{align}
To distinguish this system with the one we want to study, i.e., $M=\infty$, we will index all quantities by a $M$. Proposition~6 of~\cite{BFS09} tells us that
\begin{equation}\label{ftime}
\Pb\left( \bigcap_{k=1}^{m} \{x_{n_k +M}^{M} (t) > s_k\} \right)=\det(1-\chi_s K^{M} \chi_s)_{\ell^{2}(\{n_1,\ldots,n_m\}\times\mathbb{Z})},
\end{equation}
where the kernel kernel $K^{M}$ has the decomposition
\begin{align}\label{3some}
K^{M}=-\phi+K^{1}+K^{2,M}.
\end{align}
Here $\phi$ and $K^{1}$ are as in Proposition~\ref{fdetform}, while $K^{2,M}$ is given by
\begin{equation}\label{3someB}
\begin{aligned}
K^{2,M} (n_1,x_1;n_2, x_2)=&\frac{1}{(2\pi\I)^{3}}\oint_{\Gamma_{\alpha-1}}\dx v \oint_{\Gamma_{0,v}}\dx z\oint_{\Gamma_{0,\alpha-1-v}} \frac{\dx w}{w}\frac{e^{tw}(w-1)^{n_1}}{w^{x_1+n_1}}\\&\times \frac{(1+z)^{x_2+n_2}}{e^{tz}z^{n_2}}\left(\frac{w(w-\alpha)}{(v+1)(v+1-\alpha)}\right)^{M}
\\& \times \frac{(1+2z)(2v+2-\alpha)}{(z-v)(z+v+1)(w-1-v)(w+1-\alpha+v)}.
\end{aligned}
\end{equation}

\begin{proof}[Proof of Proposition~\ref{fdetform}]
First we note that
\begin{equation}
\lim_{M\to \infty} \Pb\left(\bigcap_{k=1}^{m}\{x_{n_k+M}^{M}(t) > s\}\right)=\Pb\left(\bigcap_{k=1}^{m} \{x_{n_k}(t) > s\}\right).
\end{equation}
This follows since $x_{n+M}^{M}(0)=x_{n}(0)$ for all $n\geq -M$ and by the fact that in TASEP the positions of the normal particles up to a fixed time $t$ depend only on finitely many other particles on the right with probability one, as is seen from a graphical construction of it.
So it remains to show that the convergence in \eqref{kernconv} holds also on the level of fredholm determinants.

First of all, as shown already in Corollary~8 of~\cite{BFS09}), it holds
\begin{equation}\label{kernconv}
\lim_{M\to \infty} K^{M}(n_1+M,x_1;n_2+M,x_2)= K(n_1,x_1;n_2,x_2)
\end{equation}
pointwise. The reason being that for any $M>(n_1+x_1+1)$ the pole at $w=0$ in \eqref{3someB} vanishes, in the limit of large $M$ we can integrate out explicitly the simple pole at $w=\alpha-1-v$ of the kernel $K^{2,M}$ and it results in the kernel $K^2$ given in Proposition~\ref{fdetform}.

To show the convergence of Fredholm determinants we use their series expansion expression, namely
\begin{multline}\label{eq79}
\det(\Id-\chi_s K^{M} \chi_s)_{\ell^2(\{n_1,\ldots,n_m\}\times\Z)}\\
 = \sum_{n\geq 0}\frac{(-1)^n}{n!} \sum_{i_1,\ldots,i_n =1}^{m}\sum_{x_1\leq s_1}\ldots \sum_{x_n\leq s_n} \det[K^{M}(n_{i_k},x_k;n_{i_l},x_l)]_{1\leq k,l\leq n}.
\end{multline}
It is easy to see that $K^1(n_1,x_1;n_2,x_2)=0$ for $x_1<-2 n_1$ since the pole at $w=0$ vanishes after computing the residue at $w=-v$ the pole at $v=0$ vanishes. Similarly, $K^{2,M}(n_1,x_1;n_2,x_2)=0$ for $x_1<-n_1$ since the pole at $w=0$ vanishes. Further, in the term $\phi(n_1,x_1;n_2,x_2)$, if $x_2$ is bounded from below, then for $x_1$ small enough this term is also zero. This implies that the $n\times n$ determinant in \eqref{eq79} is strictly equal to zero if $x_i<-2 n_m$. The physical reason for this is that if we consider the system with particle numbers bounded from above by $n_m$, then by TASEP dynamics particles can be present only in the region on the right of $x_{n_m}(0)=-2n_m$. Consequently, the sums are finite and the by Hadamard bound $|\det[K^{M}(n_{i_k},x_k;n_{i_l},x_l)]_{1\leq k,l\leq n}|\leq C^n n^{n/2}$ for some finite constant $C$. Thus by dominated convergence we can take the limit $M\to\infty$ inside the sum and the proof is completed.
\end{proof}

\subsection{Scaling limit and asymptotics}\label{SectScalingLimit}
With the finite time formula of Proposition~\ref{fdetform} at hand, we can now proceed to prove the main result.
\begin{proof}[Proof of Theorem~\ref{ThmMain}]
The proof is identical to the one of Theorem 2.5 in~\cite{BFP06}, given that the we have convergence of the (properly rescaled) kernel in a bounded set (Proposition~\ref{PropPointwiseConv}), and good enough bounds to control the convergence of the Fredholm determinant by the use of dominated convergence. The strategy, nowadays standard, was first used by Tracy, Widom and Gravner in~\cite{GTW00}. The bounds are contained in Propositions~\ref{PropBoundPhi},\,~\ref{PropModDev}, and~\ref{PropLargeDev} below.
\end{proof}

From now on, the $u_i$ are some fixed real values. We first prove convergence to the limit kernel $K_a$ and then provide integrable bounds.
We consider the scaling
\begin{equation}\label{shockscaling}
\begin{aligned}
n_i (u,t)&=t/4 +(u_{i}+a)\left(t/2\right)^{2/3},\\
x_{i}(u,t)&=-2(u_i+a)\left(t/2\right)^{2/3}- \xi_{i}\left(t/2\right)^{1/3}.
\end{aligned}
\end{equation}
The $\xi_i$ measure the fluctuations in the $(t/2)^{1/3}$ scale with respect to the macroscopic approximation given in \eqref{eq2.6}. Accordingly, we define the rescaled kernel
\begin{equation}
K^{{\rm resc}}(u_1,\xi_1;u_2,\xi_2)=2^{x_2-x_1}(-1)^{n_1-n_2}\left(t/2\right)^{1/3} K(n_1,x_1;n_2,x_2)
\end{equation}
and similarly for each component of the kernel.

\begin{prop}[Convergence on bounded sets]\label{PropPointwiseConv}
For any fixed $L>0$, we have
\begin{equation}
\begin{aligned}
\lim_{t\to \infty}K^{\rm resc}(u_1,\xi_1;u_2,\xi_2)=K_{a}(u_1,\xi_1;u_2,\xi_2),
\end{aligned}
\end{equation}
uniformly for $\xi_1,\xi_2$ in $[-L,L]$. Here $K_a$ is the kernel from Theorem~\ref{ThmMain}.
\end{prop}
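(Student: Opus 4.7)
\emph{Proof plan.} I would prove Proposition~\ref{PropPointwiseConv} by a steepest-descent / saddle-point analysis on each of the three components $-\phi$, $K^{1}$, $K^{2}$ of the finite-time kernel from Proposition~\ref{fdetform}. Write the exponential parts of the $K^{1}$- and $K^{2}$-integrands as $\exp(t h_1(w))$ and $\exp(-t h_2(v+1))$ with
\begin{equation*}
h_i(y) = y - \tfrac{x_i+n_i}{t}\log y + \tfrac{n_i}{t}\log(y-1).
\end{equation*}
Under the scaling~\eqref{shockscaling} both ratios $n_i/t$ and $(x_i+n_i)/t$ equal $\tfrac14+O(t^{-1/3})$, so the leading-order function $h^{(0)}(y)=y+\tfrac14\log((y-1)/y)$ satisfies $h^{(0)}{}'(1/2)=h^{(0)}{}''(1/2)=0$ and $h^{(0)}{}'''(1/2)=-8$. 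Hence $w=\tfrac12$ and $y=v+1=\tfrac12$ (i.e.\ $v=-\tfrac12$) are double (KPZ-type) saddle points, and the natural local coordinates are $w=\tfrac12+Z/(2(t/2)^{1/3})$, $v=-\tfrac12+W/(2(t/2)^{1/3})$.

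A Taylor expansion of $h_i$ to third order, in which the $O(t^{-1/3})$ and $O(t^{-2/3})$ corrections to $n_i/t$ and $(x_i+n_i)/t$ generate the quadratic and linear contributions, yields after multiplication by $t$,
\begin{equation*}
t h_1(w) - t h_1(\tfrac12) = -\tfrac{Z^{3}}{3}-(u_1+a)Z^{2}+\xi_1 Z + O(t^{-1/3}),
\end{equation*}
with the analogous expansion for $v$ producing $\tfrac{W^3}{3} + (u_2+a)W^2 - \xi_2 W + O(t^{-1/3})$. Together these give the exponent ratio $e^{W^{3}/3+(u_2+a)W^{2}-\xi_2 W}/e^{Z^{3}/3+(u_1+a)Z^{2}-\xi_1 Z}$ in the second line of Definition~\ref{defLimitProcess}. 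The Jacobians $dw\, dv$, the factor $1/w\to 2$ coming from the $dw/w$ of the $K^{1}$-integrand, and the overall conjugation prefactor $(t/2)^{1/3} 2^{x_2-x_1}(-1)^{n_1-n_2}$ combine so that the rational factor $(1+2v)/((w+v)(w-v-1))$ rescales to the factor $2w/((z-w)(z+w))$ of the second line of $K_a$; the $t$-divergent constants $t h_i(\tfrac12)$ are absorbed by the $2^{x_2-x_1}(-1)^{n_1-n_2}$ part, as is standard in flat TASEP asymptotics. Deforming $\Gamma_0$, $\Gamma_{0,-v}$ to the steepest-descent paths through the saddles then yields the $\gamma_\pm$-integral of the limit.

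For $K^{2}$ the same saddle analysis applies to the exponentials, but the rational factor $(1+2v)/((v+w+1-\alpha)(w-v-\alpha))$ depends on $\alpha=1-2a(t/2)^{-1/3}$ and its pole at $v=\alpha-1-w$ sits at rescaled position $W=-Z-2a$, i.e.\ at $O(1)$ distance from the $v$-saddle. Under rescaling the rational factor becomes (with the same prefactor bookkeeping as for $K^{1}$) $2W/((Z+W+4a)(Z-W+4a))$, and an affine change of variables $(W,Z)\mapsto(w-2a,z-2a)$ -- equivalently, a deformation of the contours through the true $\alpha$-dependent saddle at $w=\tfrac12+a(t/2)^{-1/3}$ rather than at $\tfrac12$ -- converts this into $2(w-2a)/((z+w)(z-w+4a))$ and simultaneously transforms the cubic exponent into $w^{3}/3+(u_2-a)w^{2}-w(\xi_2+4au_2)+4u_2 a^{2}$, precisely matching the third line of $K_a$. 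Finally, $-\phi$ is the transition kernel of a free Bernoulli-step random walk; a direct saddle-point evaluation of its contour representation (or, equivalently, the local central limit theorem) produces the Gaussian term $-(4\pi(u_2-u_1))^{-1/2}\exp(-(\xi_2-\xi_1)^{2}/(4(u_2-u_1)))\mathbf{1}_{u_2>u_1}$ of the first line of $K_a$.

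The main technical obstacle is the $K^{2}$-analysis: because the pole $v=\alpha-1-w$ lies within $O(t^{-1/3})$ of the saddle $v=-\tfrac12$, the contour $\Gamma_{0,\alpha-1-w}$ cannot simply be deformed to a steepest-descent path without keeping track of this pole on the prescribed side; verifying that the rescaled contours match the $\Gamma_\pm$ of Definition~\ref{defLimitProcess}, and that the algebraic shift by $2a$ correctly produces the precise form $2(w-2a)/((z+w)(z-w+4a))$, is the crux of the argument and the geometric origin of the constants $\pm 2a$ and $4a$ in the limit kernel. Uniformity in $\xi_1,\xi_2\in[-L,L]$ of all these estimates is automatic, since the Taylor remainders of $h_i$ and the real-part bounds on the steepest-descent paths depend continuously on these parameters in the compact set.
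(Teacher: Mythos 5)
Your proposal matches the paper's proof in method and key ideas: steepest descent at the double saddle $w=\tfrac12$, $v=-\tfrac12$, third-order Taylor expansion producing the cubic/quadratic/linear exponent, and -- for $K^{2}$ -- the observation that the $\alpha$-dependent pole sits at $O(1)$ rescaled distance from the saddle, handled by the affine shift $(W,Z)\mapsto(w-2a,z-2a)$. This is exactly the change of variables the paper uses, namely $u=-\tfrac12+(U-2a)/(4t)^{1/3}$, $v=-\tfrac12+(V-2a)/(4t)^{1/3}$. The one organizational difference is that for $\phi$ and $K^{1}$ the paper does not rerun the saddle-point argument but instead observes that, after computing the residue for $\phi$ and making the substitution $w\to w+1$ for $K^{1}$, the rescaled objects literally coincide with those already analyzed in~\cite{BFS07} (Propositions~7, 4 and the kernel $\widehat K_t^{\rm resc}$ in (3.7) there), so the only genuinely new steepest-descent computation in the paper is the one for $K^{2}$. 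Redoing $K^{1}$ from scratch as you propose is fine and gives the same second double integral of $K_a$; it just duplicates known work.

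One small arithmetic slip: in your local coordinates $w=\tfrac12+Z/(2(t/2)^{1/3})$, $v=-\tfrac12+W/(2(t/2)^{1/3})$ and with $\alpha=1-2a(t/2)^{-1/3}$, the pole $v=\alpha-1-w$ sits at $W=-Z-4a$, not $W=-Z-2a$. This does not affect your subsequent computation: the rescaled rational factor is indeed $2W/\bigl((Z+W+4a)(Z-W+4a)\bigr)$, which under $(W,Z)\mapsto(w-2a,z-2a)$ becomes $2(w-2a)/\bigl((z+w)(z-w+4a)\bigr)$ with the pole moved to $z+w=0$, as required. You would also need to make explicit the splitting of the contours into a neighborhood of the saddle and a remainder $\Sigma$, and bound the contribution from $\Sigma$ by $e^{-c_0 t + O(t^{2/3})}$ using the sign of $\Re(f_0(v)-f_0(-\tfrac12))$ in the regions $D_i$; your remark on real-part bounds along steepest-descent paths indicates you are aware of this, and the paper's Figures~\ref{FigZeros}--\ref{FigPaths1} give the precise contour choices (in particular $\Gamma_{-1}$ is shifted by $\kappa/t^{1/3}$ with $\kappa>2^{4/3}a$ so that $\alpha-2-\Gamma_{-1}$ stays inside $\Gamma_{0,\alpha-2-u}$, which is the contour-side-of-the-pole issue you rightly flag as the crux).
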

\begin{proof}
We start with $\phi$. The residue at $0$ can be easily computed expanding $(w-1)^{n_1-n_2}$ with the binomial formula and one readily obtains that $\phi(n_1,x_1;n_2,x_2)=(-1)^{n_1-n_2}{x_1-x_2-1\choose n_2-n_1-1}$. It is then an easy computation to show that (see e.g. Proposition~7 of~\cite{BFS07})
\begin{equation}
\phi^{\rm resc}(u_1,\xi_1;u_2,\xi_2)\to\frac{\Id_{\{u_2>u_1\}}}{\sqrt{4\pi(u_2-u_1)}}\exp\left(-\frac{(\xi_2-\xi_1)^{2}}{4(u_2-u_1)}\right).
\end{equation}

Next we consider $K^1$. We make the change of variables $w\to w+1$, rename $u=w$, and set $\tau_i = u_1+a$ and $\tilde{s}_i=\xi_i$. Then,
$K^{1,{\rm resc}}$ equals the kernel $\widehat{K}_{t}^{{\rm resc}}$ in (3.7) of~\cite{BFS07}. The convergence of $\widehat{K}_{t}^{{\rm resc}}$ to the $\mathcal{A}_{2\to 1}$ transition kernel is proven in Proposition 4 of~\cite{BFS07}, giving the first double integral of $K_a$, i.e., \eqref{vanishing}+\eqref{div1} in the integral representation of Appendix~\ref{SectAppendix}.

Finally consider $K^2$. We have
\begin{multline}\label{expart}
K^{2,{\rm resc}}(u_1,\xi_1;u_2,\xi_2)\\
=-\frac{(t/2)^{1/3}}{(2\pi\I)^{2}}\oint_{\Gamma_{-1}}\dx u\oint_{\Gamma_{0,\alpha-2-u}}\dx v
\frac{1+2v}{(v+u+2-\alpha)(u+1-v-\alpha)}\\
\times \frac{e^{t f_{0}(v) + (t/2)^{2/3} (a+u_2) f_{1}(v)+(t/2)^{1/3}\xi_2 f_{2}(v)}}{e^{t f_{0}(u) + (t/2)^{2/3} (a+u_1) f_{1}(u)+(t/2)^{1/3}\xi_1 f_{2}(u)+f_3 (u)}}
\end{multline}
with
\begin{equation}
\begin{aligned}
f_{0}(v)&=-v+\frac{1}{4}\ln((1+v)/v),\\
f_{1}(v)&=-\ln(-4v(1+v)),\\
f_{2}(v)&=-\ln(2(1+v)),\\
f_{3}(v)&=\ln(1+v).
\end{aligned}
\end{equation}
The poles and order of integration are different, but the exponential part \eqref{expart} equals again the exponential part of $\widehat{K}_{t}^{{\rm resc}}$ in (3.7) of~\cite{BFS07}, so let us focus on the differences.
The critical point of $f_0$ is $-1/2$, and in Proposition~4 of~\cite{BFS07}
$\mathbb{C}$ is divided in four regions $D_i$ depending on the sign of $\Re(v)+1/2$ and of $\mathrm{Re}(f_{0}(v)-f_{0}(-1/2)))$, see Figure~\ref{FigZeros}.
\begin{figure}
\begin{center}
\psfrag{-1}[c]{$-1$}
\psfrag{-0.5}[c]{$-0.5$}
\psfrag{ 0}[c]{$0$}
\psfrag{-0.2}[c]{$-0.2$}
\psfrag{ 0.2}[c]{$0.2$}
\psfrag{D1}[c]{$D_1$}
\psfrag{D2}[c]{$D_2$}
\psfrag{D3}[c]{$D_3$}
\psfrag{D4}[c]{$D_4$}
\psfrag{x}[c]{$x$}
\psfrag{y}[c]{$y$}
\includegraphics[height=6cm]{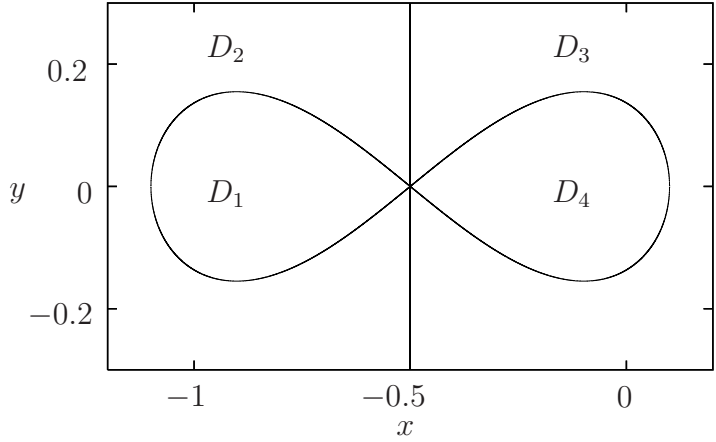}
\caption{The signum of $\Re(f_0(x+\I y)-f_0(-1/2))$ is positive in $D_2$ and $D_4$ and negative in $D_1$ and $D_3$.}
\label{FigZeros}
\end{center}
\end{figure}
For $\Gamma_{0,\alpha-2-u}$ we may choose any simple anticlockwise oriented closed path passing through $-1/2$ and staying in $D_3$. $\Gamma_{-1}$ is restricted to stay in $D_2$ except for a local modification in a \mbox{$t^{-1/3}-$neighborhood} of the critical point in order to satisfy \mbox{$\alpha-2-\Gamma_{-1}\subset \Gamma_{0,\alpha-2-u}$}. More precisely, $\Gamma_{-1}$ passes through $-1/2-\kappa/t^{1/3}$ for some $\kappa> 2^{4/3}a$, see Figure~\ref{FigPaths1}.
\begin{figure}
\begin{center}
 \psfrag{pi6}[l]{$\pi/6$}
 \psfrag{phi}[l]{$\varphi$}
 \psfrag{delta}[c]{$\kappa/t^{1/3}$}
 \psfrag{G1}[l]{$\Gamma_{-1}$}
 \psfrag{G11}[l]{$\Gamma_{0,\alpha-2-u}$}
 \includegraphics[height=5cm]{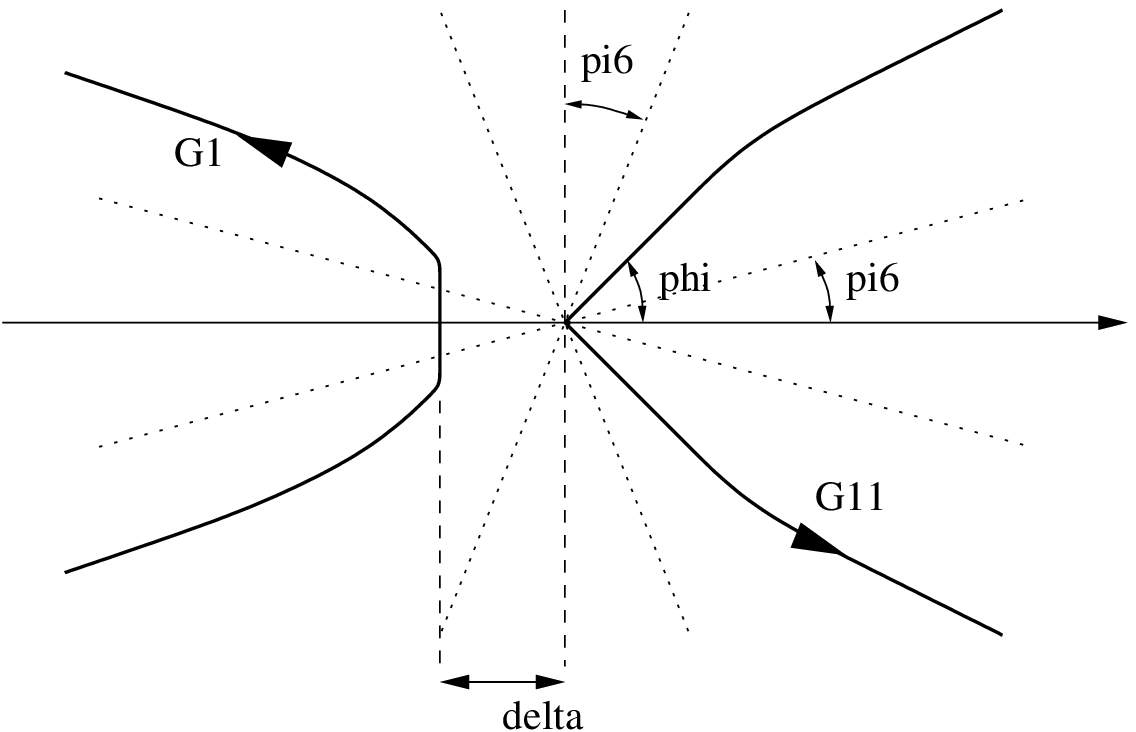}
 \caption{The contours $\Gamma_{-1}$ and $\Gamma_{0,\alpha-2-u}$ used for the pointwise convergence. The point in the middle is $(-1/2,0)$. The vertical piece in $\Gamma_{-1}$ is of length of order $t^{-1/3}$.}
 \label{FigPaths1}
\end{center}
\end{figure}
We will take $\Gamma_{0,\alpha-2-u}$ to arrive in $-1/2$ with an angle $\varphi\in(\pi/6,\pi/3)$.

Define for $\delta>0$ the segments $\Gamma_{0,\alpha-2-u}^{\delta}=\{v\in\Gamma_{0,\alpha-2-u}:|1/2+v|<\delta\}$ and $\Gamma_{-1}^{\delta}=\{u\in\Gamma_{-1}:|1/2+u|<\delta\}$.
Denote by $\Sigma$ the part of the contours where $v\notin\Gamma_{0,\alpha-2-u}^{\delta}$ and/or $u\notin\Gamma_{-1}^{\delta}$.
 Then the integral is on
\begin{equation}\Sigma+(\Gamma_{0,\alpha-2-u}^{\delta}\cup\Gamma_{-1}^{\delta}) =\Gamma_{0,\alpha-2-u}\cup\Gamma_{-1}
\end{equation}
On $\Sigma$ there exists a $c_0 >0$ that $\Re(f_{0}(v)-f_{0}(-1/2))\leq -c_0$ and/or
$\Re(-f_{0}(u)+f_{0}(-1/2)\leq -c_0$. Further $\exp(t(f_{0}(-1/2-\kappa/t^{1/3})-f_{0}(-1/2)))=\Or(1)$. Hence the contribution coming from $\Gamma_{0,\alpha-2-u}\setminus\Gamma_{\alpha-2-u}^{\delta}$ and $\Gamma_{-1}\setminus\Gamma_{-1}^{\delta}$ is bounded by
$e^{-c_{0} t+\Or(t^{2/3})}$. Furthermore, on $\Sigma$, $\left|\frac{1+2v}{(v+u+2-\alpha)(u+1-v-\alpha)}\right|\leq C (\delta)$ with $C(\delta)$ depending only on $\delta$. Hence we may bound the overall contribution of $\Sigma$ by
\begin{equation}\label{Sigma}
\left|\int_{\Sigma}\cdots\right|\leq c_1 t^{1/3} C(\delta) e^{-t c_{0}/4}
\end{equation}
for some finite constant $c_1$. As we will show below, the contribution coming from $\Gamma_{0,\alpha-2-u}^{\delta}$ and $\Gamma_{-1}^{\delta}$ is of order one, therefore the contribution of the integrals over $\Sigma$ is negligible in the $t\to\infty$ limit.

Next consider the contribution from the integral over $\Gamma_{0,\alpha-2-u}^{\delta}\cup\Gamma_{-1}^{\delta}$. Consider the change of variables
\begin{equation}
u=-1/2+(U-2a)/(4t)^{1/3}, \quad v=-1/2+(V-2a)/(4t)^{1/3}
\end{equation}
and denote $F_i(v)=e^{t f_{0}(v) + (t/2)^{2/3} (a+u_i) f_{1}(v)+(t/2)^{1/3}\xi_i f_{2}(v)+(2-i)f_3 (v)}$. Then by Taylor expansion we obtain
\begin{equation}\label{TaylorB}
\begin{aligned}
\frac{F_2(v)}{F_1(u)}=&2\frac{e^{V^{3}/3+(u_2-a) V^{2}-(\xi_2+4 a u_2) V+4 u_2 a^2}}{e^{U^{3}/3+(u_1-a) U^{2}-(\xi_1+4 a u_1) U+4 u_1 a^2}}
\frac{e^{\Or(V^{2}/t^{1/3})+\Or(V^{3}/t^{1/3})+\Or(V^{4}/t^{1/3})}}{e^{\Or (U/t^{1/3})+\Or(U^{2}/t^{1/3})+\Or(U^{3}/t^{1/3})+\Or(U^{4}/t^{1/3})}}
\end{aligned}
\end{equation}
The control of the error term in \eqref{TaylorB} is (almost) identical to the one given in Proposition~4 of~\cite{BFS07}, we therefore omit it. The error term is of order $\Or(t^{-1/3})$. For the remaining part, denote $\gamma_{+}^{\delta}=(4t)^{1/3}(\Gamma_{0,\alpha-2-u}^{\delta}+1/2)+2a$ and
$\gamma_{-}^{\delta}=(4t)^{1/3}(\Gamma_{-1}^{\delta}+1/2)+2a$ . Any extension of finite length $\gamma_{+}^{\delta},\gamma_{-}^{\delta}$ gives an error of order \eqref{Sigma}. For $|v|$ large, $\Re(f_{0}(v)-f_{0}(-1/2))$ (resp.\ $\Re(-f_{0}(v)+f_{0}(-1/2))$) decays linearly along $\gamma_{+}^{\delta}$ (resp.\ $\gamma_{-}^{\delta}$). Therefore also extending the curves to infinity creates an error of order $e^{-c t}$ for some $c>0$. We denote the resulting curves by $\gamma_{+}, \gamma_{-}$ and we are thus left with
\begin{equation}
\begin{aligned}
\frac{-1}{(2\pi\I)^{2}}\int_{\gamma_{-}} \dx U \int_{\gamma_{+}} \dx V
&\frac{e^{V^{3}/3+(u_2-a) V^{2}-(\xi_2+4 a u_2) V+4 u_2 a^2}}{e^{U^{3}/3+(u_1-a) U^{2}-(\xi_1+4 a u_1) U+4 u_1 a^2}}
\frac{2(V-2a)}{(V+U)(U-V+4a)}.
\end{aligned}
\end{equation}
The integration paths can be deformed as in Definition~\ref{defLimitProcess} without errors (the minus factors comes from the change of orientation of one of the paths).

\end{proof}

For $\phi$, an integrable bound was already obtained in~\cite{BFS07} (with $\phi$ as binomial coefficient, see the beginning of the proof of Proposition~\ref{PropPointwiseConv}).
\begin{prop}[Proposition 8 in~\cite{BFS07}]\label{PropBoundPhi}
For any $\xi_1,\xi_2$ in $\mathbb{R}$ and $u_2-u_1>0$ fixed, there exist a finite constants $C$ and $t_0$, such that for all $t\geq t_0$,
\begin{equation}
0\leq \phi^{\rm resc}(u_1,\xi_1;u_2,\xi_2)\leq C\, e^{-|\xi_2-\xi_1|}.
\end{equation}
\end{prop}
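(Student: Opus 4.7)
The plan is to exploit the closed-form representation recorded at the start of the proof of Proposition~\ref{PropPointwiseConv}, namely
\begin{equation*}
\phi(n_1,x_1;n_2,x_2)=(-1)^{n_1-n_2}\binom{x_1-x_2-1}{n_2-n_1-1}\Id_{\{n_1<n_2\}}.
\end{equation*}
Setting $N:=x_1-x_2-1$ and $k:=n_2-n_1-1$, the two signs cancel in the definition of $\phi^{\rm resc}$ and one finds
\begin{equation*}
\phi^{\rm resc}(u_1,\xi_1;u_2,\xi_2)=\frac{(t/2)^{1/3}}{2}\cdot\frac{1}{2^{N}}\binom{N}{k},
\end{equation*}
so nonnegativity is immediate. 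Under the scaling \eqref{shockscaling}, one has $N\sim 2(u_2-u_1)(t/2)^{2/3}$ and $k-N/2\sim -\tfrac12(\xi_2-\xi_1)(t/2)^{1/3}$.

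The core ingredient I would establish is a uniform local-limit bound for the binomial distribution: there exists $C_0$ such that for every $N\ge 1$ and every $0\le k\le N$,
\begin{equation}\label{localCLT}
\frac{\binom{N}{k}}{2^{N}}\le\frac{C_0}{\sqrt{N}}\exp\Bigl(-\frac{2(k-N/2)^2}{N}\Bigr).
\end{equation}
I would obtain \eqref{localCLT} by the standard split into a Stirling/saddle-point argument in the moderate-deviation regime $|k-N/2|\le N^{2/3}$ (which delivers the Gaussian prefactor $\sqrt{2/(\pi N)}$), combined with the Chernoff/Pinsker inequality $\binom{N}{k}/2^{N}\le\exp(-2(k-N/2)^2/N)$ in the large-deviation tail, where the right-hand side of \eqref{localCLT} is automatically larger once $|k-N/2|\gg\sqrt{N}$. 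Equivalently, one can derive \eqref{localCLT} directly by a saddle-point estimate on $\binom{N}{k}=\frac{1}{2\pi\I}\oint(1+z)^{N}z^{-k-1}\,\dx z$ evaluated on the circle $|z|=k/(N-k)$.

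Plugging \eqref{localCLT} into the formula for $\phi^{\rm resc}$, the prefactor $(t/2)^{1/3}/(2\sqrt{N})$ is bounded in $t$ and the exponent converges to $-(\xi_2-\xi_1)^2/[4(u_2-u_1)]$, yielding a uniform Gaussian bound
\begin{equation*}
\phi^{\rm resc}(u_1,\xi_1;u_2,\xi_2)\le C_1\exp\Bigl(-\frac{(\xi_2-\xi_1)^2}{4(u_2-u_1)}\Bigr)
\end{equation*}
for all $t\ge t_0$ and all $\xi_1,\xi_2\in\mathbb{R}$. Conversion to linear decay is an elementary completion of squares: for $u>0$, $\tfrac{x^{2}}{4u}\ge|x|-u$, hence $\exp(-x^{2}/(4u))\le e^{u}\exp(-|x|)$. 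This produces the stated bound with $C=C_1 e^{u_2-u_1}$.

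The only delicate point is uniformity when $|\xi_2-\xi_1|$ is allowed to grow with $t$: if $\xi_2-\xi_1$ is so negative that $N<k$, the binomial coefficient simply vanishes and the bound is trivial, while for $|\xi_2-\xi_1|$ of order $t^{1/3}$ or larger the large-deviation half of the argument used to prove \eqref{localCLT} keeps the estimate valid. So the real work is purely in assembling standard binomial estimates into a single bound valid uniformly in $t$ and $\xi_1,\xi_2$, rather than in any genuinely hard asymptotic analysis.
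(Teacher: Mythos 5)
The paper does not prove this statement itself; it imports it verbatim as Proposition~8 of~\cite{BFS07}. Your argument --- rewriting $\phi$ as a binomial coefficient, identifying $\phi^{\rm resc}=\tfrac{(t/2)^{1/3}}{2}\,2^{-N}\binom{N}{k}$, and then applying a uniform local-limit upper bound for $\mathrm{Bin}(N,1/2)$ --- is precisely the route used in that reference, and it is correct: the algebra checks out ($N\approx 2(u_2-u_1)(t/2)^{2/3}+(\xi_2-\xi_1)(t/2)^{1/3}$, $k-N/2\approx-\tfrac12(\xi_2-\xi_1)(t/2)^{1/3}$), nonnegativity is immediate, and your bound \eqref{localCLT} holds since in the bulk it is Stirling plus Pinsker, while near $k\in\{0,N\}$ the positive gap $D(p\|1/2)-2(p-1/2)^2>0$ absorbs the blow-up of $1/\sqrt{p(1-p)}$. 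One imprecision worth fixing: the announced ``uniform Gaussian bound $\phi^{\rm resc}\le C_1\exp(-(\xi_2-\xi_1)^2/[4(u_2-u_1)])$ for all $\xi_1,\xi_2$'' does not actually hold uniformly, because once $\xi_2-\xi_1\gtrsim(t/2)^{1/3}$ the term $(\xi_2-\xi_1)(t/2)^{1/3}$ dominates $N$ and the exponent $2(k-N/2)^2/N$ becomes of order $|\xi_2-\xi_1|(t/2)^{1/3}$ (linear rather than quadratic in $\xi_2-\xi_1$); that exponent is still far larger than $|\xi_2-\xi_1|$ for $t$ large, so the final $Ce^{-|\xi_2-\xi_1|}$ bound you want is unaffected --- you should just route around the Gaussian intermediate step in that regime. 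Likewise, for $\xi_2-\xi_1$ very negative you correctly observe the binomial vanishes once $N<k$; it is worth noting in addition that for $N\ge k$ but $N$ comparable to $k$ the quantity $k-N/2$ is of order $(u_2-u_1)(t/2)^{2/3}$, making the exponent of order $t^{2/3}$ and hence harmless. With those two clarifications the proposal is a complete and correct proof.
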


\begin{prop}[Moderate deviations for $K^1,K^2$]\label{PropModDev} For any $L$ large enough, there are $\varepsilon_{0}$, $t_0$ such that for all $0<\varepsilon\leq \varepsilon_{0}$, $t\geq t_0$, there exists a finite constant $C$ such that
\begin{equation}
\left|K^{1,{\rm resc}}(u_1,\xi_1;u_2,\xi_2)+K^{2,{\rm resc}}(u_1,\xi_1;u_2,\xi_2)\right| \leq C e^{-(\xi_1+\xi_2)/2}
\end{equation}
for all $\xi_1, \xi_2 \in [-L,\varepsilon t^{2/3}]\setminus[-L,L]$.
\end{prop}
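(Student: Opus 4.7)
The plan is the standard Tracy-Widom-Gravner contour-deformation argument, carried out separately on the two summands. For $K^{1,{\rm resc}}$ no new work is required: after the substitution $w\to w+1$ and the identifications $\tau_i=u_i+a$, $\tilde s_i=\xi_i$ already used in the proof of Proposition~\ref{PropPointwiseConv}, the kernel $K^{1,{\rm resc}}$ coincides with the rescaled kernel $\widehat K_t^{\rm resc}$ of~\cite{BFS07}, for which the same moderate deviation bound is established in Proposition~9 of that paper. Hence it suffices to produce the bound $|K^{2,{\rm resc}}(u_1,\xi_1;u_2,\xi_2)|\leq C e^{-(\xi_1+\xi_2)/2}$ for $\xi_1,\xi_2\in(L,\varepsilon t^{2/3}]$.

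For $K^{2,{\rm resc}}$ I would redo the steepest-descent argument of Proposition~\ref{PropPointwiseConv}, now choosing the contours to extract the required decay. As in that proof, split $\Gamma_{-1}$ and $\Gamma_{0,\alpha-2-u}$ into a local piece near the critical point $-1/2$ of $f_0$ and a far piece. The far piece contributes at most $e^{-c_0 t + \Or(t^{2/3})} \leq e^{-c_0 t/2}$, which dominates $e^{-(\xi_1+\xi_2)/2}$ since $\xi_i \leq \varepsilon t^{2/3}$. In the local piece, use the change of variables $u=-1/2+(U-2a)/(4t)^{1/3}$, $v=-1/2+(V-2a)/(4t)^{1/3}$, but place the $U$-contour on the vertical line $\Re U=-\nu_-$ and the $V$-contour on the vertical line $\Re V=\nu_+$ (i.e.\ shift $\Gamma_{-1}$ to pass through $-1/2-(\nu_-+2a)/(4t)^{1/3}$ and $\Gamma_{0,\alpha-2-u}$ through $-1/2+(\nu_+-2a)/(4t)^{1/3}$), with small positive constants $\nu_+,\nu_-$ chosen slightly asymmetric (e.g.\ $\nu_+=1/2+\mu$, $\nu_-=1/2$ with $\mu>0$ tiny) so as to avoid the poles $U+V=0$ and $U-V+4a=0$. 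The linear-in-$\xi$ part of the Taylor-expanded exponent, $-\xi_2 V+\xi_1 U$, then has real part $-\xi_2(1/2+\mu)-\xi_1/2 \leq -(\xi_1+\xi_2)/2$, producing exactly the target factor.

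It remains to show that the residual integral is bounded uniformly in $\xi_1,\xi_2\in(L,\varepsilon t^{2/3}]$ and $t\geq t_0$. The cubic $\Re(V^3/3)=\nu_+^3/3-\nu_+\eta^2$ on $V=\nu_++\I\eta$ (and analogously for $U$) provides $\xi$-independent Gaussian decay in the imaginary direction; the rational prefactor $\frac{2(V-2a)}{(V+U)(U-V+4a)}$ decays like $|V|^{-1}$ at infinity and is absolutely integrable against this Gaussian weight; the quadratic-in-$U,V$ terms contribute bounded factors. The main obstacle is to control the Taylor remainders of order $\Or(V^4/t^{1/3})$ and $\Or(U^4/t^{1/3})$ uniformly in the moderate-deviation window: since $\xi_i$ can be as large as $\varepsilon t^{2/3}$, one must verify that the Gaussian damping from the cubic is still sufficient to absorb them after extraction of the $e^{-\xi_i\nu_\pm}$ factor. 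Because the shifts $\nu_\pm$ are chosen $\xi_i$-independent, the standard bounds of Proposition~4 of~\cite{BFS07} go through without modification; the extra pole $U-V+4a=0$ and the factor $V-2a$ absent from $K^{1,{\rm resc}}$ introduce only $\xi$- and $t$-independent constants.
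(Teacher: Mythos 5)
Your treatment of $K^{1,{\rm resc}}$ is fine (it is indeed imported from~\cite{BFS07}; the paper cites Proposition~5 there rather than Proposition~9, but this is immaterial). The treatment of $K^{2,{\rm resc}}$, however, has two genuine gaps.

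\textbf{The missing residue.} You propose to shift the $v$-contour $\Gamma_{0,\alpha-2-u}$ so that, after rescaling, it sits on a vertical line $\Re V=\nu_+$ with $\nu_+$ a fixed positive constant. But by construction $\Gamma_{0,\alpha-2-u}$ is a closed contour that \emph{must enclose the pole} $v=\alpha-2-u$. With $u$ on $\Gamma_{-1}$ passing through $\Re U=-\nu_-$, the pole sits at $\Re v = -1/2 + (\nu_--2a)/(4t)^{1/3}$, i.e.\ in the rescaled coordinate at $\Re V = \nu_-$. Your choice $\nu_+>\nu_-$ therefore pushes the $v$-contour past this pole, and the deformation from the original closed loop to your vertical line picks up a residue contribution that your argument never accounts for. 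This is precisely why the paper's proof distinguishes the two cases $\sigma_1\geq\sigma_2$ and $\sigma_1\leq\sigma_2$: when $\sigma_1\leq\sigma_2$, the $v$-contour has to be moved to the right of the pole and the residue term $\mathcal{I}_1$ in \eqref{res} must be extracted and bounded separately (and that extra term is not small --- it carries the main contribution). Without this term the claimed identity is simply wrong.

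\textbf{The contour shift is too small to be a steepest-descent choice.} You place the contours at a $\xi$-\emph{independent} distance of order $t^{-1/3}$ from the critical point $-1/2$ and argue that the remainders of Proposition~4 of~\cite{BFS07} go through. But the phase that matters in the moderate-deviation window is $f_0(u)-\sigma_1\ln(2(1+u))$, whose saddle points sit at $u+1/2=\pm\sqrt{\sigma_1/2}+\Or(\sigma_1)$, i.e.\ at distance of order $\sqrt{\sigma_1}$ from $-1/2$. For $\xi_1$ near the top of the range this distance is of order $\sqrt\varepsilon$, i.e.\ much larger than your $t^{-1/3}$ shift. The contribution $(t/2)^{1/3}\xi_i f_2$ then generates a term $\Or(\xi_i\,\zeta^2)$ in the exponent ($\zeta=u+1/2$); on your fixed contour this becomes $\Or(\xi_i\,\nu_-^2/t^{2/3})\cdot t=\Or(\xi_i\,\nu_-^2\,t^{1/3})$, which is \emph{not} controlled by the Taylor remainders of~\cite{BFS07} (that proposition treats $\xi_i$ bounded). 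The paper's choice to place $\Gamma_{\rm vert}$ at distance $\sqrt{\sigma_1}/2+\kappa/t^{1/3}$ from $-1/2$ is exactly what makes the cubic dominate this quadratic correction and produces the bound $e^{-\xi_1^{3/2}/4}$ that then implies the target $e^{-(\xi_1+\xi_2)/2}$ once $L$ is large. Asserting that ``the standard bounds go through without modification'' is where the proof breaks down; this is the essential new estimate in the moderate-deviation regime and it cannot be borrowed verbatim.

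A minor additional point: even the avoidance of the poles $V+U=0$, $V-U-4a=0$ and the requirement $|2(1+v)|\geq 1$ force $\nu_\pm$ to grow with $|a|$; fixed values like $\nu_\pm\approx1/2$ do not work for general $a\in\R$.
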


\begin{proof}
For $K^{1,{\rm resc}}$ the statement is Proposition~5 in~\cite{BFS07}. For $K^{2,{\rm resc}}$, we follow a similar strategy, but let us give the details. Define $\sigma_i = \xi_i t^{-2/3} 2^{-1/3} \in (0, \varepsilon]$ and denote the integrand by
$G_{\sigma_1,\sigma_2}(u,v):=\frac{F_{2}(v)}{F_{1}(u)}\frac{(t/2)^{1/3}(1+2v)}{(v+u+2-\alpha)(u+1-v-\alpha)}$. Let $\mathcal{I}$ be an interval on which $\Gamma_{-1},\Gamma_{0,\alpha-2-u}$ are parametrized. The analysis of Proposition~\ref{PropPointwiseConv} shows that for a constant $C$
\begin{multline}\label{properbound}
|K^{2,{\rm resc}}(n_1,0;n_2,0)|\\
\leq \int_{\mathcal{I}^{2}}\dx s \dx r |\Gamma_{-1}'(s)\Gamma_{0,\alpha-2-u}'(r)G_{0,0}(\Gamma_{-1}(s),\Gamma_{0,\alpha-2-u}(r))|\leq C.
\end{multline}

If $\sigma_i >0$, we have an additional factor
\begin{equation}\label{exfac}
\exp(-t\sigma_2\ln(2+2v))\exp(t\sigma_1\ln(2+2u))
\end{equation}
in the integrand of \eqref{properbound}. As we shall show in (b), (c), (e), (f) below, if we are not close to $-1/2$, then $|\eqref{exfac}|\leq e^{-(\xi_1+\xi_2)/2}$ and thus get the bound $C e^{-(\xi_1+\xi_2)/2}$. Close to $-1/2$, we do a modification of one of the contours, depending on whether
 $\sigma_1 \leq \sigma_2$ or $\sigma_1 \geq \sigma_2$, and then get the needed decay for \eqref{exfac}.

In the $\sigma_1\geq\sigma_2$ case, we modify $\Gamma_{-1}$ near the critical point $-1/2$ and show that in the unmodified region the decay is the same as in the case $\sigma_1=\sigma_2=0$ case times an integrable factor. We then deal with the modified region and provide the needed decay there too. If $\sigma_1\leq\sigma_2$, we integrate out the residue at $v=\alpha-2-u,$ and show the needed decay for it by modifying $\Gamma_{-1}$. In the remaining integral we may then deform the contour $\Gamma_{0,\alpha-2-u}$ to get the desired decay.

\textbf{Case} $\sigma_1 \geq \sigma_2$. The paths $\Gamma_{0,\alpha-2-u}$ and $\Gamma_{-1}$ are as in Figure~\ref{FigPaths1} except that the distance of the vertical piece of $\Gamma_{-1}$ with respect to $-1/2$ is $\sqrt{\sigma_1}/2+\kappa/t^{1/3}$ instead of $\kappa/t^{1/3}$.
Near $-1/2$ we modify $\Gamma_{-1}$ by a vertical part $\Gamma_{{\rm vert}}$ that passes through $-1/2-\mu$ with $\mu\ll 1$ (see \eqref{vert}) and which is symmetric w.r.t.  the real line. As in Proposition~\ref{PropPointwiseConv}
let $\varphi\in (\pi/6,\pi/3)$ be the angle with which $\Gamma_{0,\alpha-2-u}$ leaves $-1/2$ and let $\kappa > 2^{4/3}a$. The region $D_1$ in Figure~\ref{FigZeros} leaves $-1/2$ with angle $\pm 5\pi/6$. Consequently,
for $\Gamma_{{\rm vert}}$ to end  outside $D_1$ and satisfy $\alpha-2-\Gamma_{{\rm vert}}\subset \Gamma_{\alpha-2-u}$ we can choose (for $t$ large enough) its length as $\mu b$ for some $b\in (\tan(\pi/6),\tan(\varphi))$. Hence we define
\begin{equation}\label{vert}
\Gamma_{{\rm vert}}=\{-1/2-(\sqrt{\sigma_1}/2+\kappa/t^{1/3})(1+\I \rho), \rho \in [-b,b]\}.
\end{equation}

(a) The choice of contours is such that
\begin{equation}
\begin{aligned}
{\rm dist}(\Gamma_{0,\alpha-2-u},\Gamma_{-1}+1-\alpha)&\geq c_3 \sqrt{\sigma_1 },\\
{\rm dist}(-\Gamma_{0,\alpha-2-u},\Gamma_{-1}+2-\alpha)&\geq c_3 \sqrt{\sigma_1 }.
\end{aligned}
\end{equation}
for some constant $c_{3}=c_{3}(b,\varphi)>0$.
This is at least the same order as
 for the contours in Proposition ~\ref{PropPointwiseConv} where we had
\begin{equation}
\begin{aligned}
{\rm dist}(\Gamma_{0,\alpha-2-u},\Gamma_{-1}+1-\alpha)&\leq (\kappa-2^{4/3}a)/t^{1/3},\\
{\rm dist}(-\Gamma_{0,\alpha-2-u},\Gamma_{-1}+2-\alpha)&\leq (\kappa-2^{4/3}a)/t^{1/3}.
\end{aligned}
\end{equation}
Hence (as in the \mbox{$\sigma_1=\sigma_2=0$} case) the
$\big|\frac{1+2v}{(v+u+2-\alpha)(u+1-v-\alpha)}\big|$ term does not create problems .

(b) The contour $\Gamma_{0,\alpha-2-u}$ can be chosen such that $|1+v|$ reaches its minimum at $v=-1/2$ so we can simply bound
\begin{equation}
|e^{-t\sigma_2 \ln(2(1+v))}|\leq 1.
\end{equation}

(c) Let $u\in\Gamma_{-1}\setminus\Gamma_{{\rm vert}}$. In the following, we set \mbox{$\widehat\sigma_1 := (\sqrt{\sigma_{1}}+2\kappa/t^{1/3})^{2},$} which is just a shift in the variable $\sqrt{\xi_1}$. $\Gamma_{-1}$ can be chosen such that on
$\Gamma_{-1}\setminus\Gamma_{{\rm vert}}$
the maximum of $|1+u|$ is reached at $\rho=\pm b$ . For $\varepsilon$ small enough
\begin{equation}
(2|1+u|)^{2}=1-2\sqrt{\widehat\sigma_1}+(b^{2}+1)\widehat\sigma_1\leq 1-\sqrt{\sigma_1}.
\end{equation}
Therefore it holds
\begin{equation}
|e^{ t \sigma_1 \ln(2(1+u)) }| \leq e^{t\sigma_1 \ln(1-\sqrt{\sigma_1})/2}\leq e^{-\xi_{1}^{3/2}/2^{3/2}+\Or(t \sigma_{1}^{2})}\leq e^{-\xi_{1}^{3/2}/4}
\end{equation}
for $\e$ small enough.

(d) For the integral on $\Gamma_{\mathrm{vert}}$, it is an integral on $[-b,b]$ in the variable $\rho$. Since $\Gamma_{{\rm vert}}'(\rho)=(\sqrt{\sigma_1}/2+\kappa/t^{1/3})\I$, this term multiplied by the $t^{1/3}$ prefactor gives a term $\Or(\xi_1^{1/2})$. So it suffices to have a bound on the integrand that controls it. On $\Gamma_{{\rm vert}}$ we use Taylor expansion around $-1/2$ (from which $\Gamma_{{\rm vert}}$ is at most $\Or(\sqrt{\varepsilon})$ far away). The $u-$dependant part of the exponential term becomes
\begin{equation}\label{est}
e^{-t f_{0}(-1/2)+t\widehat\sigma_{1}^{3/2}(1+\I\rho)^{3}/6-u_1 (t/2)^{2/3}\widehat\sigma_1(1+\I\rho)^{2}}
e^{-t\sigma_1 \sqrt{\widehat\sigma_1}(1+\I\rho)+\Or(t \widehat\sigma_{1}^{2})}.
\end{equation}
Now we take real parts in the exponent. We see that
for $L$ large and $\varepsilon$ small enough we have $\xi_{1}^{3/2}\gg t\widehat\sigma_{1}^{2}$ and $\xi_{1}^{3/2}\gg \,(2^{-1/6}\sqrt{\xi_1}+2\kappa)^{2}$.
We get the upper bound
\begin{equation}
\begin{aligned}
|\eqref{est}|&\leq e^{-tf_{0}(-1/2)+\xi_{1}^{3/2}(-5/6-\rho^{2}/2)}e^{-u_{1} \frac{(2^{-1/6}\sqrt{\xi_1}+2\kappa)^{2}}{2^{2/3}}(1-\rho^{2})}e^{-\xi_1 \kappa 2^{1/3}}e^{\Or(t \widehat\sigma_{1}^{2})}\\&\leq e^{-tf_0 (-1/2)-\xi_{1}^{3/2}/4}.
\end{aligned}
\end{equation}
The $e^{-tf_0 (-1/2)}$ cancels exactly with the contribution coming from the integrand in the $v$ variable. Finally note that for $L$ large enough
\begin{equation}\label{fini}
e^{-\xi_{1}^{3/2}/4}\leq e^{-\xi_{1}\sqrt{L}/4}\leq e^{-(\xi_{1}+\xi_2)/2}.
\end{equation}

\textbf{Case} $\sigma_1 \leq \sigma_2$. Here we integrate out the residue $w=\alpha-2-u$ and obtain
\begin{multline}\label{nores}
K^{2,{\rm resc}}(n_1,\xi_1;n_2,\xi_2)\\
=\mathcal{I}_1 - (t/2)^{1/3}\frac{1}{(2\pi\I)^{2}}\oint_{\Gamma_{-1}}\dx u\oint_{\Gamma_{0}}\dx v
\frac{1+2v}{(v+u+2-\alpha)(u+1-v-\alpha)}\\
 \times \frac{e^{t f_{0}(v) + (t/2)^{2/3} (a+u_2) f_{1}(v)+(t/2)^{1/3}\xi_2 f_{2}(v)}}{e^{t f_{0}(u) + (t/2)^{2/3} (a+u_1) f_{1}(u)+(t/2)^{1/3}\xi_1 f_{2}(u)+f_3 (u)}},
\end{multline}
where
\begin{multline}\label{res}
\mathcal{I}_1 =(t/2)^{1/3}\frac{1}{2\pi \I}\oint_{\Gamma_{-1}}\dx u e^{t(f_{0}(\alpha-2-u)-f_{0}(u))}e^{(t/2)^{2/3}((a+u_2)f_{1}(\alpha-2-u)-(a+u_1)f_{1}(u))}\\
\times e^{(t/2)^{1/3}(\xi_2 f_{2}(\alpha-2-u)-\xi_1 f_2(u))}e^{-f_{3}(u)}.
\end{multline}

The contours $\Gamma_{0}$ and $\Gamma_{-1}$ in the double integral in \eqref{nores} satisfy \mbox{$\alpha-2-\Gamma_{-1}\supset \Gamma_{0}$} and $\Gamma_{-1}$ passes through the critical point $-1/2$. To provide the integrable bound for the double integral in \eqref{nores}, one does the same analysis as in the $\sigma_1 \geq \sigma_2$ case, except that the roles of $\Gamma_{-1}$ and $\Gamma_{0}$, and $\sigma_{1}$ and $\sigma_{2}$ are reversed: We modify $\Gamma_{0}$ by a vertical part with distance $(\sqrt{\sigma}_2+2\kappa/t^{1/3})/2$ to $-1/2$ and then go through the steps (a) to (d).

We have for $\Gamma_{-1}$ as in Figure~\ref{FigPaths1}, $\sigma_1=\sigma_2=0$ and $t$ large enough the bound
\begin{equation}\label{hypo}
\mathcal{I}_1 =\frac{(t/2)^{1/3}}{2\pi \I}\oint_{\Gamma_{-1}}|\dx u| \,\bigg|\frac{e^{t f_{0}(\alpha-2-u)}e^{(t/2)^{2/3}(a+u_2)f_{1}(\alpha-2-u)}}{e^{t f_{0}(u)}e^{(t/2)^{2/3}(a+u_1)f_{1}(u)}e^{f_{3}(u)}}\bigg|\leq c_2.
\end{equation}
The bound \eqref{hypo} follows from the identity in \eqref{nores} and the fact that respective bounds hold for $K^{2,{\rm resc}}$ and the double integral in \eqref{nores}.

For $\mathcal{I}_1$ we modify $\Gamma_{-1}$ near $-1/2$ by a vertical piece
\begin{equation}
\Gamma_{{\rm vert}}=\{-1/2-\sqrt{\sigma_2}(1+\I\rho)/2, \rho \in [-b,b]\}
\end{equation}
where $b> \tan(\pi/6)$.

Compared to $\sigma_1=\sigma_2=0$, the integrand has the additional factor
\begin{equation}
\exp(t\sigma_2 f_{2}(\alpha-2-u))\exp(-t\sigma_1 f_2(u))
\end{equation}
(e) We can choose the contour $\Gamma_{-1}$ such that $|1+u|<1/2$
for all $u\in\Gamma_{-1}$. In particular for $u\in\Gamma_{-1}\setminus\Gamma_{{\rm vert}}$ we may simply bound
\begin{equation}
|\exp(-t\sigma_1 f_2(u))|=\exp(t\sigma_1 \ln(2|1+u|)\leq 1.
\end{equation}
(f) Furthermore, $\Gamma_{-1}$ may be chosen such that for $u\in\Gamma_{-1}\setminus\Gamma_{{\rm vert}}$ the minimum of $|\alpha-1-u|=|u+2^{4/3}a/t^{1/3}|$ is reached at $\rho=\pm b$.
For this $u$ we have
\begin{equation*}
(2|u+2^{4/3}a/t^{1/3}|)^{2}=(1+2^{7/3}/t^{1/3}+\sqrt{\sigma_2})^{2}+\sigma_2 b^{2}\geq1+\sqrt{\sigma_2},
\end{equation*}
so that we get the bound for $L$ large and $\varepsilon$ small
\begin{equation*}
e^{t \sigma_{2} f_{2}(\alpha-2-u)}=e^{-t\sigma_2 \ln((2|u+2^{4/3}a/t^{1/3}|)^{2})/2}e^{\Or(t\sigma_{2}^{2})}\leq e^{-t\sigma_{2}^{3/2}/4}\leq e^{-\xi_{2}^{3/2}/6}\leq e^{-(\xi_1+\xi_2)}.
\end{equation*}
Now we deal with $\Gamma_{{\rm vert}}$. We write
\begin{equation}
\alpha-2-u=-1/2+V_1 \quad u=-1/2+V_2
\end{equation}
with $V_2=-\sqrt{\sigma_2}(1+\I\rho)/2$ and $V_1=-V_2-2^{4/3}/t^{1/3}$. Next we do Taylor around $-1/2$ in $f_0$ and we first obtain
\begin{equation}\label{mycal}
e^{t(f_{0}(\alpha-2-u)-f_{0}(u))}=e^{4t(V_1^{3}-V_{2}^{3})/3}e^{\Or(t\sigma_{2}^{2})}.
\end{equation}
We compute
\begin{equation}
\begin{aligned}\label{mytayl}
\Re(4t V_{1}^{3}/3)&=-2^{6}/3+2^{11/3}\sqrt{\sigma_2}	t^{1/3}-2^{4/3}\sigma_{2}t^{2/3}(1-\rho^{2})+t\sigma_{2}^{3/2}(1-3\rho^{2})/6.
\end{aligned}
\end{equation}
For $L$ large, the last term in \eqref{mytayl} dominates, thus \eqref{mytayl} reaches its maximum at $\rho=0$. So we may bound
\begin{equation}\label{final1}
|\eqref{mycal}|\leq e^{\xi_{2}^{3/2}(1/(3\sqrt{2})+c_4 \sqrt{\varepsilon})},
\end{equation}
with $c_4 >0$ a constant.

As for $f_1$, by Taylor expansion around $-1/2$ we obtain \begin{equation}\label{myf2}e^{(t/2)^{2/3}((a+u_2)f_{1}(\alpha-2-u)-(a+u_1)f_{1}(u))}=e^{(t/2)^{2/3}\sigma_2 (1+\I\rho)^{2}(u_2-u_1)}e^{\Or(\sqrt{\xi_2})}e^{\Or(t^{2/3}(V_{1}^{3}-V_{2}^{3}))}.
\end{equation}
Thus, for $L$ large we can bound
\begin{equation}\label{final2}
|\eqref{myf2}|\leq e^{c_5 \xi_2},
\end{equation}
for some constant $c_5>0$.

Finally, for $f_2$ we obtain
\begin{equation}
\begin{aligned}\label{myf3}
e^{(t/2)^{1/3}(\xi_2 f_{2}(\alpha-2-u)-\xi_1 f_2(u))}&=e^{t\sigma_{2}(-2V_1) }e^{t\sigma_{1}2V_2}e^{\Or(t\sigma_{2}V_{1}^{2})}e^{(t\sigma_{1}V_{2}^{2})}
\\&\leq e^{-t\sqrt{\sigma_{2}}(1+\I\rho)(\sigma_2+\sigma_1)}
e^{c_7\xi_2}e^{c_8\sqrt{\varepsilon}\xi_{2}^{3/2}},
\end{aligned}
\end{equation}
with $c_7>0$ and $c_8>0$ some constants.
Therefore, for $L$ large and $\varepsilon$ small enough we obtain
\begin{equation}\label{final3}
|\eqref{myf3}|\leq e^{-\xi_{2}^{3/2}/(2\sqrt{2})}
\end{equation}
Now \eqref{final3} dominates \eqref{final1}, \eqref{final2} for $L$ large and $\varepsilon$ small enough. So,
putting together (e), (f) and \eqref{hypo}, \eqref{final1}, \eqref{final2}, \eqref{final3} we obtain the desired bound for $|K^{2,{\rm resc}}|$.
\end{proof}
\begin{prop}[Large deviations for $K^1,K^2$]\label{PropLargeDev}
Let $\varepsilon>0$. Then, there is a finite constant $C$ such that for $t$ large enough we have
\begin{equation}
\left|K^{1,{\rm resc}}(u_1,\xi_1;u_2,\xi_2)+K^{2,{\rm resc}}(u_1,\xi_1;u_2,\xi_2)\right| \leq C e^{-(\xi_1+\xi_2)/2}.
\end{equation}
for $\xi_1,\xi_2 \geq \varepsilon t^{2/3}$.
\end{prop}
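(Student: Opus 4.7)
The proposal mirrors the architecture of Proposition~\ref{PropModDev} but with $\sigma_i := 2^{-1/3}\xi_i t^{-2/3}\ge 2^{-1/3}\varepsilon$ now of order one or larger, so that the contour modifications must be made at an $O(1)$ rather than an $O(t^{-1/3})$ scale. For $K^{1,\rm resc}$, I would invoke Proposition~6 of~\cite{BFS07} directly: after the change of variables $w\mapsto w+1$, $\tau_i=u_i+a$, $\tilde s_i=\xi_i$ identified in the proof of Proposition~\ref{PropPointwiseConv}, $K^{1,\rm resc}$ coincides with the kernel $\widehat K_t^{\rm resc}$ of~\cite[(3.7)]{BFS07}, whose large-deviation bound on the regime $\xi_i\ge \varepsilon t^{2/3}$ is exactly the desired $Ce^{-(\xi_1+\xi_2)/2}$.

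The bulk of the work is for $K^{2,\rm resc}$. Starting from the representation~\eqref{expart}, the leading exponent of the integrand is $t\bigl(f_0(v)+\sigma_2 f_2(v)\bigr)-t\bigl(f_0(u)+\sigma_1 f_2(u)\bigr)$. The saddles of $w\mapsto f_0(w)+\sigma f_2(w)$ solve $4w^2+4w(1+\sigma)+1=0$, giving two real roots
\[
w_\pm(\sigma)=\tfrac12\bigl(-(1+\sigma)\pm\sqrt{\sigma(\sigma+2)}\bigr),
\]
with $w_-(\sigma)\in[-1,-1/2]$ and $w_+(\sigma)\in[-1/2,0]$, confluent to $-1/2$ as $\sigma\to 0^+$. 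I deform $\Gamma_{0,\alpha-2-u}$ to pass through $w_+(\sigma_2)$ and $\Gamma_{-1}$ to pass through $w_-(\sigma_1)$ in their (locally vertical) steep-descent directions. A Taylor expansion around $-1/2$, using $f_0(-1/2+\delta)-f_0(-1/2)=2\delta^{3}/3+O(\delta^{5})$ and $f_2(-1/2+\delta)+\ln 2=-2\delta+2\delta^{2}-8\delta^{3}/3+O(\delta^{4})$, yields $\bigl(f_0+\sigma f_2\bigr)(w_\pm(\sigma))-\bigl(f_0+\sigma f_2\bigr)(-1/2)=\mp\tfrac{4}{3}\sigma^{3/2}+O(\sigma^{2})$ as $\sigma\to 0^{+}$; monotonicity and an explicit computation at large $\sigma$ (where $w_+(\sigma)\to 0^-$ and $w_-(\sigma)\sim -\sigma$) promote this to a uniform bound $\ge c\sigma^{3/2}$ for $\sigma\ge 2^{-1/3}\varepsilon$. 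Hence on the deformed contours the integrand is bounded by $e^{-ct(\sigma_1^{3/2}+\sigma_2^{3/2})}=e^{-c'(\xi_1^{3/2}+\xi_2^{3/2})}$, which dominates both the $(t/2)^{1/3}$ prefactor and the required $e^{-(\xi_1+\xi_2)/2}$ factor with room to spare for $\xi_i\ge\varepsilon t^{2/3}$.

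\textbf{Main obstacle.} The prescription requires $\alpha-2-\Gamma_{-1}\subset\Gamma_{0,\alpha-2-u}$, i.e., the $v$-contour must enclose the pole $v=\alpha-2-u\simeq w_+(\sigma_1)$ for every $u$ near $w_-(\sigma_1)$. This is incompatible with the $v$-contour passing through $w_+(\sigma_2)$ when $\sigma_1>\sigma_2$. The resolution, exactly as in the $\sigma_1\le\sigma_2$ case of Proposition~\ref{PropModDev}, is to integrate out the residue at $v=\alpha-2-u$, producing a single-integral term $\mathcal I_1$ analogous to~\eqref{res} plus a double integral with decoupled contours $\Gamma_0$ and $\Gamma_{-1}$. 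The single integral is bounded by steep descent at the real saddle of $u\mapsto f_0(\alpha-2-u)-f_0(u)+\sigma_2 f_2(\alpha-2-u)-\sigma_1 f_2(u)$; the double integral is bounded as above. The rational factor $[(v+u+2-\alpha)(u+1-v-\alpha)]^{-1}$ is kept harmless by maintaining an $O(1)$ separation between the two contours, which is compatible with the steep-descent geometry since both saddles stay uniformly away from $\{-1,0,\alpha-1\}$ for $\sigma_i\ge 2^{-1/3}\varepsilon$. The delicate bookkeeping of the two cases $\sigma_1\lessgtr\sigma_2$ and the verification of the steep-descent geometry are the main tasks, but the $O(1)$-sized deformations make the estimates considerably more forgiving than in the moderate-deviations regime.
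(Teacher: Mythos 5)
Your proposal is structurally similar in that it treats $K^{1}$ via Proposition~6 of~\cite{BFS07} and splits the $K^{2}$ analysis according to $\sigma_1\lessgtr\sigma_2$ with a residue extraction in one branch, but the route you take through the $K^{2}$ estimate is genuinely different from — and considerably more laborious than — the paper's. The paper does \emph{not} chase the $\sigma$-dependent saddles $w_\pm(\sigma)$ at all in the large-deviation regime. It keeps exactly the contours used in the moderate-deviation proof with $\sigma_1=\sigma_2=\varepsilon/2$ (a fixed $O(\sqrt{\varepsilon})$ displacement from $-1/2$), observes that the integrand for $\sigma_1=\sigma_2=\varepsilon/2$ is bounded in $L^1$ uniformly in $t$, and writes $f_{0,\sigma}=f_{0,\varepsilon/2}-(\sigma-\varepsilon/2)\ln(2(1+\cdot))$ to isolate the one extra factor $e^{-t(\sigma_2-\varepsilon/2)\ln(2(1+v))}e^{t(\sigma_1-\varepsilon/2)\ln(2(1+u))}$. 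On these frozen contours that factor is bounded by $e^{t\sigma_1\ln(1-\sqrt{\varepsilon/2})/2}\leq e^{-c t^{1/3}\xi_1}$, which overwhelms everything. No steep-descent geometry at $w_\pm(\sigma)$, no Taylor expansion, no tracking of moving saddles. This is what makes the large-deviation proposition short: the whole point of the regime $\xi_i\geq\varepsilon t^{2/3}$ is that a trivial decay rate in $t$ already suffices.

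Your proposal, besides being more elaborate than necessary, has a concrete error in the case analysis. With the $u$-contour passing through $w_-(\sigma_1)$, the leftmost point of the image curve $\alpha-2-\Gamma_{-1}$ sits near $\alpha-2-w_-(\sigma_1)\approx -1/2+\sqrt{\sigma_1/2}$ (for small $\sigma$), whereas the $v$-contour's leftmost point is $w_+(\sigma_2)\approx -1/2+\sqrt{\sigma_2/2}$. Enclosure therefore holds precisely when $\sigma_1\geq\sigma_2$ and fails when $\sigma_1<\sigma_2$ — the opposite of what you assert. Your text integrates out the residue exactly in the branch where it is not needed and leaves the genuinely problematic branch untreated; as written, the double integral for $\sigma_1<\sigma_2$ would be over a contour that no longer encloses the pole at $v=\alpha-2-u$. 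A second, unaddressed difficulty of the moving-saddle route is that for large $\sigma_1$ the saddle $w_-(\sigma_1)\sim-\sigma_1$ runs off to $-\infty$ and the pole $\alpha-2-w_-(\sigma_1)\sim\sigma_1-1$ runs off to $+\infty$, so the ``$O(1)$-sized deformations'' you invoke are not actually $O(1)$ uniformly on $\xi_i\in[\varepsilon t^{2/3},\infty)$, and keeping the rational prefactor under control requires additional argument. The paper's frozen-contour strategy sidesteps all of this.
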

\begin{proof}
The estimate for $K_{1}^{{\rm resc}}$ is contained in Proposition~6 of ~\cite{BFS07}.
As in Proposition~\ref{PropModDev} we denote $\sigma_i=\xi_i 2^{-1/3}t^{-2/3}$ and distinguish the cases $\sigma_1 \geq \sigma_2$ and $\sigma_1 \leq \sigma_2$.

\textbf{Case} $\sigma_1 \geq \sigma_2$. We choose the same contours as in the moderate deviations regime for $\sigma_1 \geq \sigma_2$, here with $\sigma_1=\sigma_2=\varepsilon/2$ (the additional shift by $2\kappa/t^{1/3}$ in $\Gamma_{\mathrm{vert}}$ is however unnecessary for $t$ large enough). We write \mbox{$f_{0,\sigma}(v)=f_0(v)-\sigma\ln(2(1+v))$} so that
\begin{equation}
f_{0,\sigma}=f_{0,\varepsilon/2}(v)-(\sigma_2-\varepsilon/2)\ln(2(1+v)).
\end{equation}
Thus, compared to the $\sigma_1=\sigma_2=\varepsilon/2$ case we have the additional factor
\begin{equation}\label{addfac}
e^{-t(\sigma_2-\varepsilon/2)\ln(2(1+v))}e^{t(\sigma_1-\varepsilon/2)\ln(2(1+u))}.
\end{equation}
It suffices to bound $|\eqref{addfac}| $ because the integrand for $\sigma_1=\sigma_2=\varepsilon/2$ is (uniformely for $t$ large enough) bounded in $L^{1}$.
The choice of contours is such that $|1+v|$ reaches its minimum at $v=-1/2$ and $|1+u|$ its maximum at $u=-1/2-\sqrt{\varepsilon/2}/2$. Using further $\sigma_1-\varepsilon/2\geq \sigma_1 /2$, we may bound
\begin{equation}
|\eqref{addfac}|\leq e^{t\sigma_1 \ln(1-\sqrt{\varepsilon/2})/2}\leq e^{-c_9 t^{1/3}\xi_1}\leq e^{-(\xi_1+\xi_2)}.
\end{equation}
for some constant $c_9 > 0$.

\textbf{Case} $\sigma_1 \leq \sigma_2$. We again choose the same contours as in the moderate deviations regime for $\sigma_1 \leq \sigma_2$, with $\sigma_2=\sigma_1=\varepsilon/2$ (again the additional shift by $2\kappa/t^{1/3}$ is unnecessary for $t$ large enough). We again integrate out the residue at $v=\alpha-2-u$. In the double integral \eqref{nores}, with respect to $\sigma_1 = \sigma_2=\varepsilon/2$ we get the same additional factor, which can now be bounded
\begin{equation}
e^{-t(\sigma_2-\varepsilon/2)\ln(2(1+v))}e^{t(\sigma_1-\varepsilon/2)\ln(2(1+u))}\leq e^{-t(\sigma_2-\varepsilon/2)\ln(1+\sqrt{\varepsilon/2})}\leq e^{-(\xi_1+\xi_2)}.
\end{equation}
As for the residue \eqref{res}, compared to $\sigma_1 = \sigma_2=\varepsilon/2$ we have the additional term
\begin{equation}
e^{t(\sigma_1-\varepsilon/2)\ln(2(1+u))}e^{-t(\sigma_2-\varepsilon/2)\ln(2(-u-2^{4/3}/t^{1/3}))}\leq e^{-(\xi_1+\xi_2)},
\end{equation}
where the inequality holds since $|2(-u-2^{4/3}/t^{1/3})|\geq 1+\sqrt{\varepsilon/2}$ and \mbox{$|2(1+u)|\leq 1$}.
\end{proof}

\newpage

\appendix

\section{Kernel $K_a$ in terms of Airy functions}\label{SectAppendix}
Here we give the explicit form of $K_a$ that we used for the numerical evaluation of $G_a$ and its statistics.
\begin{lem}
Denote $u_{i,a}=u_i+a$ we have (with the conjugation transferred to the diffusion part)
\begin{align}
&K_a (u_1,\xi_1;u_2,\xi_2)\stackrel{\rm conj}{=}-\frac{e^{\frac{2}{3}u_{1,a}^{3}+u_{1,a}\xi_1}}{e^{\frac{2}{3}u_{2,a}^{3}+u_{2,a}\xi_2}} \frac{e^{-(\xi_2-\xi_1)^{2}/(4(u_2-u_1))}}{\sqrt{4\pi(u_2-u_1)}}\Id(u_2>u_1)\label{diff} \\
&+\int_{0}^{\infty}\dx \lambda \Ai(\xi_1+u_{1,a}^{2}+\lambda)\Ai(\xi_2+u_{2,a}^2+\lambda)e^{\lambda(u_2-u_1)}\label{vanishing} \\
& +\int_{0}^{\infty}\dx\lambda \Ai(\xi_1+u_{1,a}^{2}-\lambda)\Ai(\xi_2+u_{2,a}^2+\lambda)e^{\lambda (2a+u_1+u_2)}\label{div1} \\
&-\int_{0}^{\infty}\dx\lambda \Ai(\xi_1+u_{1,a}^{2}+\lambda)\Ai(\xi_2+u_{2,a}^2+\lambda)e^{\lambda (4a+u_2-u_1)} \label{divstrong}\\
&+\int_{0}^{\infty}\dx\lambda \Ai(\xi_1+u_{1,a}^{2}+\lambda)\Ai(\xi_2+u_{2,a}^2-\lambda)e^{\lambda (2a-u_1-u_2)}. \label{div2}
\end{align}
\end{lem}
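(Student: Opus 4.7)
The plan is to verify the identity term by term. The diffusion line \eqref{diff} is obtained trivially from the diffusion part of $K_a$ by the conjugation $K\mapsto\tfrac{f(u_1,\xi_1)}{f(u_2,\xi_2)}K$ with $f(u,\xi)=\exp(\tfrac{2}{3}(u+a)^{3}+(u+a)\xi)$: this conjugation is invisible in the Fredholm determinant and simply makes the explicit prefactor in \eqref{diff} appear. For the four Airy-type lines, each of the two double contour integrals in Definition~\ref{defLimitProcess} will be reduced to a $\lambda$-integral of products of Airy functions via three standard steps: partial fractions, exponential integral representation of each resulting simple pole, and completion of the cube in the remaining univariate contour integrals.

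Concretely, the partial fraction decompositions read
\[
\frac{2w}{(z-w)(z+w)}=\frac{1}{z-w}-\frac{1}{z+w},\qquad
\frac{2(w-2a)}{(z+w)(z-w+4a)}=-\frac{1}{z+w}+\frac{1}{z-w+4a}.
\]
For each simple pole $\tfrac{1}{X}$ I use $\tfrac{1}{X}=-\int_{0}^{\infty}e^{\lambda X}\dx\lambda$ when $\Re X<0$ and $\tfrac{1}{X}=\int_{0}^{\infty}e^{-\lambda X}\dx\lambda$ when $\Re X>0$. The contour constraints $r_{+}>-r_{-}>\theta$ on $\gamma_{\pm}$ and $-R_{-}>R_{+}>\theta+4|a|$ on $\Gamma_{\pm}$ fix the relevant signs of $\Re(z\pm w)$ and $\Re(z-w+4a)$. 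After this substitution the $w$- and $z$-integrals decouple, and every resulting univariate contour integral has the form $\tfrac{1}{2\pi\I}\int\dx w\,e^{w^{3}/3+Uw^{2}-Xw}$ on a vertical contour whose real part is made sufficiently positive by the contour bounds; the shift $w\mapsto w-U$ together with the standard Airy integral representation gives $e^{2U^{3}/3+UX}\Ai(U^{2}+X)$, and an analogous computation with sign flips handles the $z$-integrals.

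For the first double integral the coefficient of $w^{2}$ is already $U=u_{i,a}$, so the procedure directly produces \eqref{vanishing} (from $\tfrac{1}{z-w}$, which shifts both $\xi_1,\xi_2$ by $+\lambda$ and yields $\lambda$-exponent $u_2-u_1$) and \eqref{div1} (from $-\tfrac{1}{z+w}$, which shifts $\xi_2$ by $+\lambda$ and $\xi_1$ by $-\lambda$ and yields $\lambda$-exponent $u_1+u_2+2a$). For the second double integral $U=u_i-a$ and the linear coefficient contains an extra $4au_i$, plus there is a constant $4u_ia^{2}$ in the exponent. Two algebraic identities collapse everything onto the $u_{i,a}$-normalised form:
\[
(u_i-a)^{2}+4au_i=u_{i,a}^{2},\qquad \tfrac{2}{3}(u_i-a)^{3}+4au_i^{2}=\tfrac{2}{3}u_{i,a}^{3}-\tfrac{4}{3}a^{3}.
\]
The first ensures that the Airy arguments come out as $u_{i,a}^{2}+\xi_i\pm\lambda$ as in \eqref{divstrong} and \eqref{div2}; the second, together with the conjugation, cancels the cubic-in-$u_i$ part of the prefactor. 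The extra $e^{4a\lambda}$ coming from the shift in $\tfrac{1}{z-w+4a}$ supplies the $4a$ contribution in the $\lambda$-exponent of \eqref{divstrong}, while the $-\tfrac{1}{z+w}$ piece gives \eqref{div2} with $\lambda$-exponent $2a-u_1-u_2$.

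The main obstacle is the bookkeeping for the second double integral: one must reconcile the prefactor $e^{(u_i-a)\xi_i}$ produced by completing the cube with the $e^{u_{i,a}\xi_i}=e^{(u_i+a)\xi_i}$ encoded in the conjugation $f$, tracking simultaneously the $4u_ia^{2}$ constants, the $\pm\tfrac{4}{3}a^{3}$ terms coming from the two identities above, and the $e^{4a\lambda}$ arising from the shifted pole. No step is conceptually new --- the whole proof is a careful algebraic verification --- but the signs in the exponential integral representations are sensitive to the precise contour constraints in Definition~\ref{defLimitProcess}, and the match between the $(u_i-a)$-normalisation natural to the integrand and the $u_{i,a}$-normalisation natural to the target is exactly what the two algebraic identities above are designed to provide.
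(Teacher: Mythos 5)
Your overall strategy is exactly the paper's: the paper's two-line proof invokes precisely the Airy contour identity and the exponential representation of $1/z$, and your partial-fraction decompositions plus the cube completion are the natural way to deploy them. The verification of \eqref{vanishing} and \eqref{div1} from the first double integral along these lines is straightforward and correct.

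There is, however, a concrete gap in the way you dispatch the second double integral, and it is exactly at the point you yourself flag as ``the main obstacle.'' You claim that the two identities
\[
(u_i-a)^{2}+4au_i=u_{i,a}^{2},\qquad \tfrac{2}{3}(u_i-a)^{3}+4au_i^{2}=\tfrac{2}{3}u_{i,a}^{3}-\tfrac{4}{3}a^{3}
\]
``collapse everything onto the $u_{i,a}$-normalised form.'' They do not: the first reconciles the Airy argument and the second the $\xi$-independent constant, but neither touches the $\xi_i$-linear coefficient. Carrying out the cube completion in the second double integral, the exponent one actually gets (say for the $w$-integral, after absorbing the $e^{4u_2a^2}$ prefactor) is
\[
\tfrac{2}{3}u_{2,a}^{3}-\tfrac{4}{3}a^{3}+(u_2-a)\xi_2+(u_2-a)\lambda ,
\]
and analogously $-\tfrac{2}{3}u_{1,a}^{3}+\tfrac{4}{3}a^{3}-(u_1-a)\xi_1-(u_1-a)\lambda$ for the $z$-integral. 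The $\pm\tfrac{4}{3}a^{3}$ cancel between the two factors, and the $\lambda$-exponents recombine to give $4a+u_2-u_1$ and $2a-u_1-u_2$ as required, but the $\xi$-prefactor that results is $e^{(u_2-a)\xi_2-(u_1-a)\xi_1}$, not the $e^{u_{2,a}\xi_2-u_{1,a}\xi_1}$ that the conjugation indicated by \eqref{diff} removes. The difference is a residual $e^{2a(\xi_1-\xi_2)}$ multiplying \eqref{divstrong} and \eqref{div2} (but not \eqref{vanishing}, \eqref{div1}). This residual is itself of the form $g(\xi_1)/g(\xi_2)$ with $g(\xi)=e^{2a\xi}$, so it is invisible in the Fredholm determinant and is legitimately swallowed by the ``$\stackrel{\rm conj}{=}$'' notation --- but your proof should say so explicitly rather than asserting that the two algebraic identities already supply the match. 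As currently written, the reader is led to expect that the displayed prefactor in \eqref{diff} is the entire conjugation and that the four Airy terms come out clean under it, which is not what the computation gives; the correct conclusion is that the displayed kernel equals $K_a$ modulo the product of the conjugation in \eqref{diff} and the additional conjugation $e^{2a(\xi_1-\xi_2)}$ acting on the last two terms.
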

\begin{proof}
The result is an easy computation that uses the identities
\begin{equation}
\begin{aligned}
&\frac{-1}{2\pi \I}\int_{\delta+\I\R}\dx v e^{v^{3}/3+xv^{2}+yv}=\Ai(x^{2}-y)e^{\frac{2}{3}x^{3}-xy},
\\&\frac{1}{z}=\int_{0}^{\infty}\dx \lambda e^{-\lambda z} \qquad (z \in\mathbb{C},\,\Re(z)>0),
\end{aligned}
\end{equation}
for any $\delta>\max\{0,x\}$.
\end{proof}
\begin{remark}
Alternatively, via the identity (A.6) of~\cite{BFS07}, one has
\begin{equation}
\begin{aligned}\label{alter}
\eqref{div1}&=-\int_{-\infty}^{0}\dx \lambda e^{\lambda(u_{2,a}+u_{1,a})}\Ai(\xi_1+u_{1,a}^{2}-\lambda)\Ai(\xi_2+u_{2,a}^{2}+\lambda)
\\&+2^{-1/3}\Ai\left(2^{-1/3}(\xi_1+\xi_2)+2^{-4/3}(u_1-u_2)^{2}\right)e^{-\frac{1}{2}(u_{1,a}+u_{2,a})(\xi_2+u_{2,a}^{2}-\xi_1-u_{1,a}^{2})},
\end{aligned}
\end{equation}
with an analogous formula for \eqref{div2}.
\end{remark}


\end{document}